\newcommand{\cmark}{\color{green}{\ding{51}}}%
\newcommand{\xmark}{\color{red}{\ding{55}}}
\newcommand{\usefirst}[1]{\operatorname{uf}(#1)}
\newcommand{\corn}{C-CoRn}
\newcommand{\one}{\mathbf 1}
\newcommand{\demph}{\textbf}
\newcommand{\ask}{\operatorname{?}}
\newcommand{\answer}{\operatorname{!}}
\newcommand{\bool}{\mathbb{B}}
\newcommand{\true}{\mathrm{true}}
\newcommand{\false}{\mathrm{false}}
\newcommand{\fst}{\operatorname{fst}} 
\newcommand{\snd}{\operatorname{snd}}
\newcommand{\inl}{\operatorname{inl}}
\newcommand{\inr}{\operatorname{inr}}
\newcommand{\option}{\operatorname{opt}}
\newcommand{\some}{\operatorname{Some}}
\newcommand{\none}{\operatorname{None}}
\newcommand{\seq}{\operatorname{seq}}
\newcommand{\cat}[1]{+{}\!\!\! {}+{#1}}
\newcommand{\A}{\mathbf A}
\newcommand{\Q}{\mathbf Q}
\newcommand{\QQ}{\mathbb Q}
\newcommand{\B}{\mathcal B}
\newcommand{\BB}{\mathbb B}
\newcommand{\KK}{\mathbb K}
\newcommand{\X}{\mathbf X}
\newcommand{\Y}{\mathbf Y}
\newcommand{\NN}{\mathbb N}
\newcommand{\RR}{\mathbb R}
\newcommand{\mto}{\rightrightarrows}
\newcommand{\dom}{\operatorname{dom}}
\newcommand{\size}[1]{|#1|}
\newcommand{\abs}[1]{|#1|}
\newcommand{\str}{\mathbf}
\newcommand{\Coq}{\textsc{Coq}}
\newcommand{\Incone}{\textsc{Incone}}
\newcommand{\Agda}{\textsc{Agda}}
\newcommand{\sign}{\operatorname{sign}}
\renewcommand\AB@affilsepx{\hfill \protect\Affilfont}
\title{Continuous and monotone machines}
\author[1]{Michal Konečný}
\author[2]{Florian Steinberg}
\author[3]{Holger Thies}
\affil[1]{Aston University, Birmingham}
\affil[2]{INRIA, Saclay}
\affil[3]{Kyushu University, Fukuoka}
\date{}
\begin{document}
\maketitle
\begin{abstract}
  We investigate a variant of the fuel-based approach to modeling diverging computation in type theories and use it to abstractly capture the essence of oracle Turing machines.
  The resulting objects we call continuous machines.
  We prove that it is possible to translate back and forth between such machines and names in the standard function encoding used in computable analysis.
  Put differently, among the operators on Baire space, exactly the partial continuous ones are implementable by continuous machines and the data that such a machine provides is a description of the operator as a sequentially realizable functional.

  Continuous machines are naturally formulated in type theories and we have formalized our findings in \Coq{}.
  Continuous machines, their equivalence to the standard encoding and correctness of basic operations are now part of \Incone{}, a \Coq{} library for computable analysis.
  While the correctness proofs use a classical meta-theory with countable choice, the translations and algorithms that are proven correct are all fully executable.  
  Along the way we formally prove some known results such as existence of a self-modulating moduli of continuity for partial continuous operators on Baire space.

  To illustrate their versatility we use continuous machines to specify some algorithms that operate on objects that cannot be fully described by finite means, such as real numbers and functions.
  We present particularly simple algorithms for finding the multiplicative inverse of a real number and for composition of partial continuous operators on Baire space.
  Some of the simplicity is achieved by utilizing the fact that continuous machines are compatible with multivalued semantics.
  We also connect continuous machines to the construction of precompletions and completions of represented spaces, topics that have recently caught the attention of the computable analysis community.
\end{abstract}

\section{Introduction}

The main goal of this paper is to add to the tools available for producing correct and efficient software with strict specifications that involve high-level mathematical concepts.
Such methods are required, for example, for reliable simulations of safety-critical physical systems, and the number of applications is steadily growing.
Computable analysis is a formal model for reliable computation involving real numbers and other spaces of interest in analysis.
It extends classical computability theory from discrete structures to continuous ones,
replacing natural numbers as codes for abstract objects by elements of Baire space.
Computable analysis originated with Turing's fundamental work \cite{turing_computable_1936}.
Later, it was extended to a theory of computation on real numbers and real functions by Grzegorczyk \cite{MR0089809} and Lacombe \cite{MR0105357}, and to more general spaces by Kreitz and Weihrauch \cite{kreitz1985theory,weihrauch_computable_2000}.
Algorithms from computable analysis come with a mathematical correctness proof by design, making them well-suited for applications where correctness is essential.
As the semantics of such algorithms tend to be subtle, formal methods can provide an additional reassurance that the produced software is correct.

The present work arose as part of our effort to contribute to the development of a framework for conveniently formulating algorithms from computable analysis in a setting that is both fully computational and features formal correctness proofs.
Our work has meanwhile been used to extend \Incone{} \cite{steinberg2019quantitative}, a library for computable analysis based on the proof assistant \Coq{}.
The task of formalizing known results may not be particularly creative, but difficulties encountered in such an endeavour often lead to new developments.
In the present case, attempts to avoid overly heavy use of \Coq{}'s dependent type system, and to maintain executability within \Coq{} in the presence of non-computational axioms have lead us to concepts that we believe to be of theoretical interest.
It should be kept in mind that the theoretical results are backed by a formal development and we consider the formalization of known and unknown results a part of our contribution.

\paragraph{Main results}
We introduce ``continuous machines'' as an encoding of partial continuous operators
derived from the fuel-based approach to modeling diverging computation in intuitionistic type theories \cite{bove2016partiality}.
Continuous machines can be understood as an abstraction of oracle machines as used to introduce the model of computation central to computable analysis.
There are two main points that support this analogy and distinguish our approach from uses in type theory:
The first is the presence of a functional parameter that is considered an input and that takes the role of the oracle in an oracle machine.
Machines are type-two objects, which is crucial as it makes continuity and information theoretic arguments applicable.
The second particularity is a curried discrete input, meaning that for fixed functional input we get a function that we consider the return value if it is total, if it is not total the return value is undefined.
As a consequence, the natural domains need not be open but only $G_\delta$ sets.

These two features reflect that we really encode continuous operators, i.e.\ partial functions from Baire space to Baire space, as opposed to partial continuous functionals.
We consider operator composition a natural operation, while for functionals the same operation would be called functional substitution and is of lesser importance \cite{constable1973type}.
The emphasis on operators is in tune with the principle of computable analysis to almost consistently replace the natural numbers by Baire space as the base type.
The reason this works is that partial continuous operators can be encoded as elements of Baire space by application of a partial combinatory algebra \cite{kleene1959constructivity,MR0106838}.
A partial operator is computable by an oracle machine if and only if there is a computable code.
Computability as a functional is also equivalent \cite{normann2000computability}.
In abstract terms the whole setting can be captured very concisely as investigating the computable fragment of the realizability topos over Kleene's $K_2$ and should be distinguished from settings where only the full version or that over $K_1$ are the central objects \cite{Bauer:2000:RAC:933370}.

Working directly with codes from Baire space is tedious and our main result, Theorem \ref{resu: main}, shows that continuous machines are completely equivalent:
It provides a full translation from a continuous machine to a code from Baire space and back that preserves computability.
As type-two objects continuous machines are a high-level concept and more convenient for defining partial operators.
We illustrate just how concisely algorithms on continuous data can be formulated using continuous machines at the example of inversion of a real number in Section \ref{sec: inversion}.
As another example we describe a simple and fairly efficient implementation of the composition of partial operators encoded as continuous machines in Section \ref{sec: monotone machines}.
We have fully automated the translations between continuous machines and Baire-space codes in that we have defined them in \Coq{} and provide complete formal proofs of correctness.
In fact, also the two main examples, all other important points made in this paper, and further examples whose description we omit for space reasons, have been formalized\footnote{We have setup a web page with instructions on how access our formal development and relate it to the contents of the paper \url{https://holgerthies.github.io/continuous-machines/}}.
This should be kept in mind as it justifies cutting down on some details for the sake of communicating the important ideas (especially in Section \ref{sec: continuous machines}).

\paragraph{Related work}
The topics of this paper can be viewed from a number of different perspectives.
Clearly there are links to type theory, in particular the results have been formalized in the type theory based proof assistant \Coq{}.
As stated above, the main object of investigation can be understood as a variant of the fuel-based approach to modeling divergent computations in constructive type theories.
A survey of such methods presented in type-theoretic language can be found in \cite{bove2016partiality} together with many relevant references.
However, here we avoid a type-theory-like presentation and prefer mathematical notation that consists of a mix of conventions that are commonly found in some works from computable analysis such as \cite{pauly2016topological, schroeder_phd, Bauer:2000:RAC:933370, kawamuraphd} and the game-centered parts of higher-order computability and programming language theory \cite{longley2015higher}.
In particular we choose to illustrate the use of fuel with Turing machines and oracle Turing machines directly to avoid pointers to type theory in the body of the paper.

The connection of our work to higher-order computability theory is reflected in a possible interpretation of the main result: we provide yet another characterization of the sequentially realizable functionals \cite{longley2002sequentially}.
The connection to computable analysis is also evident and it is our main source of examples.
Some of these examples nicely illustrate connections to precompleteness, constructions forcing precompleteness and completions.
These concepts have quite some history but have recently been rediscovered for their applications in the theory of Weihrauch reductions and in complexity theory for computable analysis \cite{dzhafarov2017joins, brattka2019completion, kawamura2014function}.
Completions can be understood as quotients of coinductive types constructed from the delay monad and have a distinctively domain-theoretic flavour \cite{Capretta05,altenkirch2017partiality}, but we chose not to pursue these aspects at present.

Partial operators on Baire space can also be captured in \Coq{}'s type theory, where partiality is reflected in the use of sigma-types as inputs.
That is, partial operators take as input a dependent pair of the actual input and a proof that this input is from its domain.
While continuous machines provide additional information over a direct definition in \Coq{}, 
we believe that on the computability level this difference is irrelevant as the information can be read off from each respective \Coq{} term.
For an equivalent of a fragment of the \Coq{} terms, an extraction of the additional information a continuous machine provides has been formalized in \Agda{} \cite{xu2013constructive}.
We know that an internal formulation of such a result that covers all definable functions in \Coq{} is impossible.
It involves extracting a modulus of continuity and there are known obstructions to extracting this information extensionally \cite{hotzel2015inconsistency}.
The support of tactics and \Coq{}'s formalization of \Coq{}, i.e.\ the MetaCoq project \cite{metacoq2019hal} should allow to adapt the work in \Agda{} and extend the extraction to cover all or at least the majority of relevant \Coq{} terms.

\paragraph{Notation}

For a set $\Q$ denote the set of finite lists $(q_0, \ldots, q_n)$ with $q_i \in\Q$ by $\seq \Q$.
Let $\epsilon$ be the empty list, \hbox{${}\cat{}$} the concatenation.
The disjoint union of $\Q$ with another set $\A$ denote by $\Q + \A$ and the set of ordered pairs by $\Q \times \A$.
We use $\inl$ and $\inr$ for the inclusions into the sum and for the product $\fst$ and $\snd$ for the projections.
Let $\option \Q$ be the union of $\Q$ with a single new element $\none$ and use $\some q$ for the element of $\option \Q$ corresponding to $q \in \Q$.
Although the above notation is consistent with \Coq{}'s notation, we generally prefer mathematical over type-theoretic notation.
I.e., we use $\in$ instead of $:$, and we use $:$ to separate a quantifier from the body of a formula.
This is, unless we speak about membership in a set of functions, in which case we sometimes write $\varphi \in \A^\Q$ as $\varphi\colon \Q \to \A$.
To indicate that a function is partial, we write $\varphi\colon {\subseteq\Q} \to \A$.
Note that each function $\varphi\colon \Q \to \option(\A)$ can be considered a partial function, but it should be considered a special partial function.
For a given subset $A\subseteq \Q$ we denote by $\varphi|_A$ the restriction of $\varphi$ to $A$.
We slightly abuse notation by sometimes identifying lists and the set of their elements.

\section{Computable analysis revisited}\label{sec: computable analysis}

Let $\Q$ and $\A$ be two sets that we understand to consist of \demph{questions} and \demph{answers}.
We always assume these sets to be countable, and in all concrete examples considered specifying explicit bijections with the natural numbers is straightforward.
In the following we restate standard definitions from computable analysis where we insert $\Q$ and $\A$ for the appropriate copies of $\NN$.
A \demph{representation} of a set $X$ is a partial surjective function $\delta\colon{\subseteq \A^\Q} \to X$.
For $x \in X$, each $\varphi\colon \Q \to \A$ with $\delta(\varphi) = x$ is called a \demph{name} of $x$ and should be understood to provide on-demand information about $x$.
I.e. if $\varphi$ is a name of $x$ then given a question $q\in\Q$ about $x$ the value $\varphi(q) \in \A$ is a valid answer to the question.
Call $\B := \A^\Q$ the \demph{space of names} of the representation, $\B$ due to the case $\Q = \NN = \A$ where it is the \demph{Baire space}.

A \demph{represented space} is a pair $\X:= (X, \delta_\X)$ where $\delta_\X$ is a representation of $X$.
The representation induces topological and computability structures on the set $X$.
Namely, $X$ can be made a topological space by considering the final topology of the representation and an element of a represented space is called \demph{computable} if it has a computable name.
The latter of course presumes that $\Q$ and $\A$ are such that it is clear what computability of a function from $\Q$ to $\A$ means; which is in particular the case when $\Q$ and $\A$ come with explicit bijections to $\NN$.
More generally, $\Q$ and $\A$ can be thought of as being equipped with a numbering.
If a topological space is given and a representation is to be constructed, then the candidates are expected to reproduce the given topology.

\begin{example}[Discrete spaces]
  \label{ex: discrete}
  Whenever $X$ is a discrete countable space such as the Booleans, natural numbers, integers or rationals, the following representation is appropriate:
  Pick $\Q := \one = \{\star\}$ to be the canonical one-element set, $\A := X$ to be the space itself, and let $\varphi\colon \one \to X$ be a name of $x\in X$ if and only if $\varphi(\star) = x$.
\end{example}
A similar idea can be used if $X$ comes with a numbering.
More interesting examples are spaces of continuum cardinality, that cannot be appropriately captured using numberings.

\begin{example}[$\RR_\QQ$: Reals via rational approximations]\label{ex: rational reals}
  One possible way to represent real numbers is via rational approximations.
  To make this formal, first choose the question and answer sets to both be the rational numbers, where a rational question $\varepsilon > 0$ is interpreted as an accuracy requirement and a rational answer as an approximation.
  More precisely, a name of a real number $x\in \RR$ is a function $\varphi$ such that for each rational $\varepsilon > 0$ the value $\varphi(\varepsilon)\in \QQ$ is an $\varepsilon$-approximation to $x$.
  This representation is called the \demph{rational representation $\delta_{\RR_\QQ}$} of the real numbers, and as a partial function $\delta_{\RR_\QQ}\colon{\subseteq \QQ^\QQ} \to \RR$ it is uniquely specified by
  \[
  \delta_{\RR_\QQ}(\varphi) = x  \quad \iff \quad \forall \varepsilon > 0\colon \abs{x - \varphi(\varepsilon)} \leq \varepsilon.
  \]
  We denote the corresponding represented space by $\RR_\QQ$ and use it as one of the running examples.
\end{example}
The represented space $\RR_\QQ$ is widely considered to provide the ``correct'' computability structure on the real numbers and sometimes even used as a benchmark representation in work that reasons about complexity in the setting of computable analysis \cite{MR1137517,lambov2006basic}.
The rational representation is fairly convenient:
It provides a simple question and answer structure and an intuitive interface for accessing information about real numbers.
It only uses a single additional type, namely $\QQ$, which has a well-developed theory in \Coq{}'s standard library.
A closer inspection of the construction of the rational representation reveals that it uses only the structure of the real numbers as a metric space and a distinguished sequence of simple elements.
A generalization applies to a wide variety of spaces of importance in analysis and functional analysis.

Besides spaces of continuum cardinality, the methods of computable analysis can be used to operate on finite spaces with non-discrete topology (see Example~\ref{ex: sign function} below).

\subsection{Continuous and computable functions}\label{sec: continuity}

Fix some represented spaces $\X$ and $\X'$.
Let $\B := \A^\Q$ be the space of names of the representation $\delta_\X$ of $\X$ and $\B':=\A'^{\Q'}$ that of $\delta_{\X'}$.
Let us think of $\Q$, $\A$, $\Q'$ as $\A'$ discrete spaces and consider the induced notion of continuity of operators $F\colon {\subseteq \B} \to \B'$.
More concretely an operator is continuous in this sense if its return values are determined by a finite number of values of its input function.
That is, if for all $\varphi \in \dom(F)$ and each $q' \in \Q'$ there exists a finite list of questions $\str q \in \seq(\Q)$ such that
\[ \forall \psi \in \dom(F)\colon \varphi|_{\str q} = \psi|_{\str q} \implies F(\varphi)(q') = F(\psi)(q'). \]
Since $\Q$ and $\Q'$ are countable, equivalent definitions can be obtained by introducing metric structures on $\B$ and $\B'$ or by requiring a continuous function to preserve limits of sequences.
These equivalences are useful for abstract reasoning about continuity and formal versions are available in the \Incone{} library \cite{steinberg2019quantitative}.
In the case where all question and answer sets coincide with the natural numbers, computability of operators can be introduced by means of oracle machines.
An oracle machine is a Turing machine with a marked oracle query and answer states and a marked oracle tape.
The run of such a machine on oracle $\varphi \in \B$ is defined as the run of a regular machine with the adaption that any time the oracle query state is entered, the content $q$ of the oracle tape is replaced by $\varphi(q)$ and the state is changed to the answer state.
For some background and more details we point the reader to \cite{kawamuraphd}.
It is important to keep in mind the oracle is considered an input to the computation and despite the name and other applications of the same concept where the oracle is fixed, this makes oracle machines a realistic model of computation.
Computability is a refinement of continuity in that any computable operator is continuous.
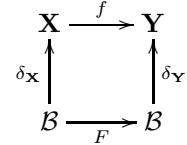
\begin{wrapfigure}{R}{.2\textwidth}
  \centering
  \vspace{-2ex}
  \xymatrix{
    \X \ar[r]^{f} & \Y \\
    \B \ar[r]_F \ar[u]^{\delta_\X} & \B \ar[u]_{\delta_\Y}
  }
    \caption{\\$F$ realizes $f$}
    \label{fig: realizers}
  \vspace{-2ex}
\end{wrapfigure}

The notions of computability and continuity of partial operators on Baire space can be pushed forward through representations to apply to functions between represented spaces:
An operator $F\colon{\subseteq \B} \to \B'$ is said to \demph{realize} a function $f\colon\X \to \X'$ between represented spaces if for each name $\varphi$ of some $x \in \X$ the value $F(\varphi)$ is defined and a name of $f(x) \in \X'$ (c.f. Figure \ref{fig: realizers}).
The function $f$ is called \demph{continuous} or \demph{computable} if it has a realizer with that property.
In all cases we are interested in, this notion of continuity coincides with topological continuity with respect to the natural topology on the space which is in turn reproduced by the final topology of the representation.
Without going into details, let us remark that this is because all representations that we consider are \demph{admissible} \cite{schroeder_phd}.
For instance, the rational representation introduced in Example~\ref{ex: rational reals} is admissible, and consequently a function on the real numbers is $\varepsilon$-$\delta$-continuous if and only if it is continuous with respect to this representation.  
These and similar statements have been formally proven in \Incone{} \cite{steinberg2019quantitative}.

Continuity is a prerequisite for computability.
The real numbers are connected, discrete spaces are totally disconnected and images of connected sets under continuous functions are connected.
For this reason most functions from the reals to the Booleans fail to be computable.
This is for instance true for equality checks, comparisons and other operations that are routinely used and seem indispensable for applications in numerics.
Often, computability can be recovered by replacing a discrete target space by an appropriate non-discrete finite space.
\begin{example}[Sign function and Kleeneans]\label{ex: sign function}
  The sign function is discontinuous as a function from the reals to a discrete space (for instance to its image as a subspace of the real numbers).
  A computable version can be recovered by replacing its three possible values with elements of the following space:
  Consider the three-point set $\{\true_\KK, \false_\KK, \bot_\KK\}$ and equip it with a representation $\delta_\KK$ defined on names $\varphi$ of type $\NN \to \option \BB$ by
  \[
  \delta_{\KK}(\varphi) = \begin{cases}
    b_{\KK}  & \text{if there exists some $n$ such that } \varphi(n) = \some b \\
    & \text{and for all } m < n, \varphi(m) = \none \\
    \bot_{\KK} & \text{ otherwise.}
  \end{cases}
  \]
  That is, the constant $\none$ sequence is a name of $\bot_\KK$ and for any other sequence the first element that is not $\none$ determines which of $\true_\KK$ and $\false_\KK$ is named.
  
  We refer to $\KK := (\{\true_\KK, \false_\KK, \bot_\KK\}, \delta_\KK)$ as the \demph{Kleeneans}.
  This space models the behavior of three-valued logics considered by Kleene, hence the name.
  Note that the representation is total, i.e.~all sequences are valid names, which makes it convenient to define realizers of functions into the space.
  When defining functions that use the Kleeneans as an argument, it is often more convenient to require names to be monotone in the sense that if the sequence contains $\some b$, all subsequent its elements repeat this value.
  The use of this restriction does not change the space, as an arbitrary name can be computably transformed into a monotone name.
  
  The sign function as a function from the reals to the Kleeneans can be defined from the Boolean comparison on the reals as
  \[ \sign_\KK(x) := \begin{cases} (0 < x)_\KK & \text{if } x \neq 0 \\ \bot_\KK &\text{otherwise.} \end{cases} \]
  Where the strict inequality could as well have been replaced by non-strict inequality as the case $x = 0$ is treated separately anyway.
  A continuous realizer of the sign as a function of type $\RR_\QQ \to \KK$ can be specified from the Boolean comparisons on the rational numbers as
  \[ F(\varphi)(n) :=
  \begin{cases}
    \some (0 < \varphi(2^{-n})) & \text{if } |\varphi(2^{-n})| > 3 \cdot 2^{-n} \\
    \none & \text{otherwise.}
  \end{cases} \]
  As comparison of rational numbers is decidable, this realizer is computable.
  To verify its correctness note that if $\varphi$ is a name of $0$, then $|\varphi(2^{-n}) - 0| \leq 2^{-n}$ implies that $F$ returns the constant $\none$ sequence.
  If $\varphi$ is a name of some $x \neq 0$ then there exists some $n$ such that $2^{-(n-2)} < |x|$ and thus $|\varphi(2^{-n})| > 3 \cdot 2^{-n}$ by a use of the reverse triangle inequality.
  Whenever we are in the first case it follows that as Booleans $0 < \varphi(2^{-n}) = 0 < x$.
  In combination of these we conclude that $F$ returns a name of the correct value.

  The requirement to be greater than $3 \cdot 2^{-n}$ in the definition of $F$ can be replaced by just demanding the same value to be greater or equal $2^{-n}$ while maintaining correctness.
  However, the former forces that the realizer always returns names that are monotone in the sense discussed above.
  To verify this note that as $|\varphi(2^{-n}) - \varphi(2^{-(n+1)})| \leq 3 \cdot 2^{-(n+1)}$, whenever $|\varphi(2^{-n})| > 3 \cdot 2^{-n}$ it follows that $|\varphi(2^{-(n+1)})| > 3 \cdot 2^{-(n+1)}$.
\end{example}

\subsection{Isomorphy, equivalence and precompleteness}\label{sec: precompleteness}
Represented spaces $\X$ and $\Y$ are \demph{isomorphic} if there exists a continuous bijection with continuous inverse and \demph{computably isomorphic} if there exists a computable bijection with computable inverse.
Two representations of the same space are called \demph{equivalent} if the identity function is an isomorphism between the corresponding represented spaces and in this case we call realizers of the identity function \demph{translations} between the representations.
The total representation of the Kleeneans is equivalent to its restriction to monotone names and the corresponding represented spaces are therefore isomorphic.

\begin{example}[Booleans as Kleeneans]\label{ex: booleans}
  Intuitively, the Kleeneans are an extension of the Booleans with an additional element for divergence.
  Formally, this is reflected in the fact that the Booleans are isomorphic to a subspace of the Kleeneans.
  Any subset of a represented space can be turned into a represented space by equipping it with the co-restriction of the representation.
  Let $\BB_\KK$ denote the represented space obtained by co-restricting the representation $\delta_{\KK}$ to the set $\{\true_\KK, \false_\KK\}$.
  Recall that, according to Example \ref{ex: discrete} a name of $b\in \BB$ is a function $\varphi \colon \{\star\} \to \BB$ on the canonical one point set that returns the element itself, i.e.\ such that $\varphi(\star) = b$.
  The inclusion $b \mapsto b_\KK$ of the Booleans into the Kleeneans can thus be realized by the computable operator defined by $F(\varphi)(n) := \some (\varphi(\star))$.
  The inverse can be realized by searching through a given name $\varphi\colon \NN \to \option(\bool)$ of an element of $\KK$ for the first $n$ such that $\varphi(n) \neq \none$ and returning its value, i.e.~returning $b$ if $\varphi(n) = \some b$.
  This defines a properly partial function as the algorithm diverges for the constant $\none$ function, i.e.\ if the input is the name of $\bot_\KK$.
  On all other inputs, in particular on the inputs that are names of an element of $\BB_\KK$, it converges and returns correct values.
\end{example}
While the spaces $\BB_\KK$ and $\BB$ are isomorphic, there are significant differences between their representations.
The representation of $\BB$ has the property that for any element of the space $b\in \BB$ and any question $q$, the set of answers can be separated into correct and incorrect ones in such a way that any function that returns correct answers is a valid name.
(The Booleans may be too simple an example; for a richer illustration of this property, see the rational representation from Example~\ref{ex: rational reals}.)
Since this property implies that the domain of the representation is closed, the representation of $\BB_\KK$ does not satisfy it.
Indeed, whether an answer in the representation of $\BB_\KK$ should be considered correct can not be made sense of without knowing at least some of the answers to other questions.
The representation of $\BB_\KK$ has the somewhat orthogonal property that it is \demph{precomplete}:
any computable function to $\BB_\KK$ has a total computable realizer.
Here, and more generally for any precomplete representation, computability can be replaced with continuity.
Any partial computable function into $\KK$ has a higher-order primitive recursive realizer, which is particularly handy in a formal setting as such a realizer can be defined in any language that is expressive enough to cover basic arithmetic facts.
Thus it allows to avoid a full formalization of oracle machines or another explicit computational model for capturing computability.

As it is instructive for understanding the central topic of this paper, let us include a sketch of how to prove precompleteness of the representation of $\BB_\KK$.
Let $M$ be an oracle machine that computes some partial operator $F$ that realizes a function $f \colon \X \to \BB_\KK$.
This means that for each fixed name $\varphi$ of some $x$, the evaluation of $M$ with $\varphi$ fixed in the oracle slot computes a name $F(\varphi)\colon \NN \to \option(\BB)$ of $f(x)\in\BB_\KK$.
To construct a machine that computes a total realizer of the same function, for a given oracle $\varphi$ on input of $n$ proceed as follows:
Execute $M$ with $\varphi$ in the oracle slot, on input $0$ for $n$ basic computational steps.  
Whenever this computation terminates, start over with the input increased by one but without resetting the timer.
Once the time allowance runs out, return the value that $M$ produced on the biggest input value for which it terminated and $\none$ if no such value exists.
This function is clearly total and computes a realizer: If $F(\varphi)$ is defined and a name of an element of $\BB_\KK$, the modified machine returns a different name of the same element.
If $F(\varphi)$ is undefined the modified realizer defaults to an infinite sequence of $\none$ from the first point where a divergence is encountered.
In particular it returns a name of $\bot_\KK$ whenever the original algorithm fully diverges in the sense that it does not return anything on any input.
If the original function $f$ was partial, the extended realizer can thus be understood to compute a total extension that takes values in $\KK$ instead of $\BB_\KK$.
This makes some authors consider the space $\KK$ as the completion of $\BB_\KK$ and also of $\BB$, for details we point the reader to \cite{brattka2019completion}.

Baire space has a precomplete representation that can be precomposed to obtain a computably equivalent precomplete version of an arbitrary representation.
A direct description of this construction that is particularly instructive for our purposes is as follows:
Given a representation $\delta$ with space of names $\B$, define a new representation $\delta'$  by replacing the questions $\Q$ by $\NN \times \Q$ and the answers $\A$ by $\option(\A)$ and let $\varphi\colon \NN\times \Q \to \option(\A)$ be a name of $x$ if and only if searching for the minimal $n$ such that $\varphi(n) = \some a$ produces a name of $x$ with respect to the original representation $\delta$.
The representation $\delta'$ is computably equivalent to $\delta$ as searching is a computable operation.
The realizer modification from the last paragraph can be adapted to work whenever an arbitrary representation $\delta$ on the target space is replaced by the computably equivalent representation $\delta'$ and, at least if question and answer sets are simple enough, the modified realizer is not only a total computable function but higher-order primitive recursive.

More information about precompleteness can be found in early work about computable analysis \cite{kreitz1985theory, schroeder_phd}.
More recently, the concept has also raised some interest in the theory of Weihrauch reductions \cite{brattka2018weihrauch}.
Precomplete representations are not only useful in formal developments but also tend to be easier to optimize for efficiency.
The rational representation from Example~\ref{ex: rational reals} is not precomplete and a popular precomplete and computably equivalent representation represents real number by sequences of intervals with rational endpoints.
Software packages for exact real arithmetic that aim at efficiency commonly prefer practical adaptations of such a precomplete representations for internal representation of the real numbers.
The rational representation only implicitly appears in handling of in and output of real number data.
This is while theoretical work on complexity theory in the setting of computable analysis tends to prefer representations with a closed domain \cite{schroedert2}.
Indeed, precomplete representations are somewhat incompatible with the established complexity model \cite{kawamura_complexity_2012} and a popular source of counterexamples \cite{kawamura2014function,steinberg2017computational}.
This mismatch has recently lead to some suggestions for adapting Kawamura and Cook's framework \cite{NEUMANN2019}.

\subsection{Formal development in \Incone{}}\label{sec: incone}
We built on the Coq library \Incone{} to formalize our results in \Coq{}.
\Incone{} translates concepts from computable analysis to a formal setting and provides many basic results about these notions \cite{steinberg2019quantitative}.
Continuity of functions on Baire space and continuity of functions between represented spaces as well as concepts relating to multivaluedness in the next section are defined closely following our mathematical description above.
A minor difference between the internals of \Incone{} and the description here is that the question set of a representation is additionally required to come with an explicit inhabitant used as default question.
Until recently the same assumption was made about the answer sets.
We get back to the use of default questions and answers in the last section.

\Incone{} uses the name ``continuity space'' for what resembles a represented space.
The reason for this divergence in terminology is that, in contrast to continuity, \Incone{} does not explicitly define computability.
Instead it approximately captures it via \Coq{}'s Prop/Type distinction:
A proof that a continuous realizer exists is a certificate for continuity while the explicit definition of a realizer as a function on the domain of the representation is a candidate for certifying computability.
However, the assumptions about the question and answer types are chosen too weak to imply definability of a bijection with the natural numbers or decidability of equality on countable types.
Thus, if the question and answer types are abstract in nature and meant to be given computational meaning through an encoding themselves, the concepts of definability of a realizer and computability may diverge, which is the reason for the decision to not use the name ``represented space''.
This could be mended by use of types from the math-comp library \cite{mahboubi2017mathematical}, specifically the countTypes that come with an explicit bijection.
However, such a switch would currently increase the strain on users considerably as constructing corresponding bijections can be arduous, especially for types  suitable for efficient computation.
In applications computational types usually appear as answers and not as questions, but an asymmetry in the assumptions is difficult to maintain when building spaces of functions (see section \ref{sec: function spaces}).

Another reason for not making computability a definition in \Incone{} in the current state is how real numbers are treated.
In the standard library of \Coq{} in its current version, the real numbers are introduced axiomatically and non-computational axioms that break with the Prop/Type distinction are stated as global facts instead of being treated as parameters.
While use of these axioms for correctness proofs should be eligible, their global status makes it possible to misuse them in definitions.
For a user being presented with a dependent type of a function together with a term proving a specification, there is no easy way to tell if all uses of the axioms are appropriate and purely for specification purposes.
That is until he attempts to run an algorithm and the corresponding term fails to fully reduce.
We hope this problem to be solved in the not too distant future, as the classical development of real numbers is currently being overhauled under the name mathcomp-analysis \cite{affeldtmathematical}.
However, many computation heavy libraries like the interval library by Melquiond~\textit{et.~al.} still specify against the real numbers from the standard library.
We would like these to be available for use in our development and have thus not made the switch to mathcomp analysis yet.
The most ``proper'' way to introduce a notion of computability would be to rely on a formalization of a model of computation \cite{ForsterSmolka:2018:Computability-JAR}.
\newpage

\section{Multifunctions and abstract machines}\label{sec: multifunctions}

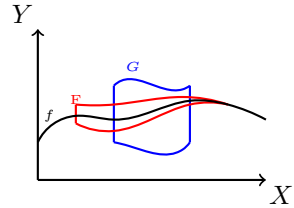
\begin{wrapfigure}{R}{.40\textwidth}
  \begin{tikzpicture}[thick,scale=1]
    \draw[->] (0,0) -- (0,2);
    \draw[->] (0,0) -- (3,0);
    \node at (3.2,-.2) {$X$};
    \node at (-.2,2.2) {$Y$}; 
    \node at (1.25,1.5) {\color{blue} \tiny $G$}; 
    \node at (0.15,.85) { \tiny $f$}; 
    \draw[blue] (1,1.25) to [curve through ={(1,1.25)  . . (1.9,1.2)  }]  (2,1.25);
    \draw[blue] (2,1.25) -- (2,.5);
    \draw[blue] (1,.5) to [curve through ={(1.2,.45)  . . (1.9,.4)  }] (2,.5);
    \draw[blue] (1,1.25) -- (1,.5);
    \draw[red] (.5, 1) -- node[above]{\tiny F} (.5, .75);
    \draw[red] (.5,1) to [curve through ={(1,1)  . . (1.9,1.1)  }] (2.5,1);
    \draw[red] (.5,.75) to [curve through ={(1,.65)  . . (2,1)  }] (2.5,1);
    \draw (0,.5) to [curve through ={(.5,.85)  . . (1,.8) . . (2,1.05) . . (2.5,1) }] (3,.8) ;
  \end{tikzpicture}
  \caption{$f$ chooses through \textcolor{red}{$F$} and \textcolor{red}{$F$} tightens \textcolor{blue}{$G$}, i.e. $\textcolor{red}{F} \prec \textcolor{blue}{G}$.\newline
    Thus $f$ also chooses through \textcolor{blue}{$G$}.}
  \label{fig: tightenings}
\end{wrapfigure}

In computable analysis it is often the case that continuity fails for extensionality reasons and dropping extensionality by using multivalued functions is a popular and powerful tool to work around such problems.
A multivalued function from a set $X$ to another set $Y$ assigns to each element $x\in X$ a set of eligible return values $F(x) \subseteq Y$.
This set may be empty and those $x$ for which it is non-empty are considered to constitute the \demph{domain} $\dom(F) \subseteq X$.
The multifunction is called \demph{total} if its domain is all of $X$ and \demph{single-valued} if it only returns sub-singletons, i.e. each value set has at most one element.
Each partial function can be considered a single-valued multifunction; this multifunction uniquely specifies the partial function and is total if and only if the partial function is.

A partial function $f$ is said to \demph{choose through} a multifunction $F$ if for each $x \in \dom(F)$ it returns an eligible return value, i.e. $f(x)$ is defined and an element of $F(x)$.
Note that this allows the domain of the partial function to be bigger than that of the multifunction.
A multifunction should be considered a specification of all the partial functions that choose through it and this defines an important ordering on the multifunctions:
A multifunction $F$ is said to \demph{tighten} another multifunction $G$, in symbols $F \prec G$, if any partial function that is a choice for $F$ is also a choice for $G$ (c.f. Fig. \ref{fig: tightenings}).
This can equivalently be formulated as $F \prec G$ if and only if
\[ \dom(G) \subseteq \dom(F) \quad\text{and}\quad \forall x\in \dom(G), F(x) \subseteq G(x). \]
For partial functions $f \prec g$ if and only if $f$ is an extension of $g$.
A partial function $f$ chooses through a multivalued $F$ if and only if $f \prec F$.

The multivalued functions from $X$ to $Y$ are in one to one correspondence with the relations, i.e. the subsets of $X \times Y$.
However, multivalued functions should be understood as directed and thus the natural operations differ.
For instance, for $F \colon Y \mto Z$ and $G\colon X \mto Y$ the composition as multivalued functions $F \circ G$ returns on input of $x\in X$ the set
\[ (F \circ G)(x) := \{z \in Z \mid G(x) \subseteq \dom(F) \wedge \exists y\in G(x)\colon z \in F(y)\}. \]
This defines an associative operation that is asymmetric.
By contrast, the natural composition of relations, which has only the existential part of the condition, is symmetric.
Note that the non-existential part of the condition does not mention $z$ and can be understood as a domain-condition reflected in the identity $F \circ G = F \circ_R G|_{\{x \mid G(x) \subseteq \dom(F)\}}$, where $\circ_R$ denotes the relational composition.
The multifunction composition respects the interpretation as specifications in the sense that if partial functions $f$ and $g$ choose through $F$ and $G$ respectively, then their composition as partial functions chooses through the composition of $F$ and $G$ as multifunctions while the relational composition does not.
More generally, the multifunction composition respects the tightening ordering.

\subsection{Multivalued functions and realizability}
Recall that a function $f\colon \X \to \X'$ between represented spaces is realized by some $F\colon {\subseteq \B} \to \B'$ if $F$ translates each name of any input $x\in\X$ to a name of the corresponding return value $f(x)$.
An alternate way to express this is that $\delta_{\X'} \circ F$ is an extension of $f \circ \delta_\X$ as expressed diagrammatically in Figure~\ref{fig: realizers}.
This suggests a lift of being a realizer to multivalued functions: say that a multifunction $g\colon \X \mto\X'$ is realized by another multifunction $G\colon \B \mto\B'$ if $\delta_{\X'} \circ G \prec g \circ \delta_\X$.
Here we just replaced function composition by multifunction composition and being an extension by being a tightening.
This definition behaves as expected and for partial functions reproduces what we have already done in examples:
An operator is a realizer of a partial function if and only if it is a realizer with respect to the subspace representation on the argument space.
For multifunctions, being a realizer is preserved under tightening the realizer and ``loosening'' the realized function.
As continuity and computability are preserved under composition and multifunction composition is compatible with tightenings, the notions of continuous and computable realizability still compose well.

While any multivalued function is uniquely determined by its partial choice functions, the same need not be true when the attention is restricted to continuous partial functions.
A continuously realizable multifunction need not have any partial continuous choice functions at all.
From the point of view of constructive logics, the existence of multifunctions with continuous realizer but without continuous choice functions can be interpreted as failure of a choice principle.
A constructive proof of a forall-exists-statement can more often than not be used to devise a computable realizer of the corresponding total multifunction. 
When this multifunction does not have any continuous choice functions, it means that it is not possible to constructively prove the existence of a function that selects an existential witness.

As we are mostly interested in continuous and computable realizers, one may argue that allowing multivalued realizers is not necessary.
Continuity makes sense only for functions, or at least is known to be problematic in the presence of multivaluedness \cite{Pauly2013RelativeCA}.
However, we shall use a notion of algorithms that can \textit{a priori\/} give multivalued results.
Although it is possible to force single-valuedness, it can be convenient to not always do this right away and the notion of multivalued realizers turns out to be useful.
Another consequence that is useful in other parts of computable analysis is that when multivalued realizers are allowed, any multifunction $g$ between represented spaces has a unique realizer that is maximal with respect to tightenings, namely $\delta_{\X'}^{-1}\circ g \circ \delta_\X$ \cite{brattka2017weihrauch}.

\subsection{Algorithmic content and machines}
\begin{figure}
  \centering
  \begin{subfigure}[t]{.475\textwidth}
    \centering
    \begin{tikzpicture}
      \tikzstyle{every node}=[font=\footnotesize]  
      \node at (-1,3.75) {$\varphi$};
      \draw[->] (-1,3.5) -- (-1,3);
      \draw[dashed,->] (-1,3) -- (-1,2.5) -- (-1.25,2.5);

      \draw (-3.5,3) rectangle (1.5,0);
      \node at (-3.25,2.75) {$M$};
      
      \draw (-2.75,2) rectangle (-.75,1);
      \node at (-2.55,1.8) {$O$};
      \draw[->] (-1.25,1) -- (-1.25,0.75) node[below]{$a'$};
      \draw[->] (-2.25,0.75) -- (-2.25,1);
      \node at (-2.25,0.6) {$q'$};
      
      \draw[rounded corners = 5pt] (-2.25,2.25) rectangle (-1.25,2.75);
      \node at (-1.75,2.5) {$\varphi$};
      \draw[->] (-1.5,2.25) -- (-1.5,2);
      \draw[->] (-2,2.0) -- (-2,2.25);

      \draw[->] (-2,-0.5) node [below]{$(n,q')$} -- (-2,0);
      \draw[dashed,->] (-2,0) -- (-2,0.25) -- (-3,.25) -- (-3,1.5) -- (-2.75,1.5);
      \draw[dashed,->] (-2.25,0.25) -- (-2.25,.35);
      \node (C) at (-1.75,1.5) {run $n$ steps};
      \draw[dashed,->] (-.75,1.5) -- (-.3,1.5);

      \node (K) at (.5,1.5) {terminated?};
      \draw[dashed,->] ($(K)+(0.7,-0.2)$) -- ++(0,-0.5) node[midway,right]{\footnotesize \xmark}
      -- (1.2,.25);
      \draw[dashed,->] (1.2,.25) -- (0,.25) -- (0,0);
      \node at (.5,.4) {\tiny$\none$};
      \draw[dashed,->] ($(K)+(0.7,-0.2)$) -- node[midway,left]{\footnotesize \cmark} ++(-.5,-0.5);
      \draw[dashed,->] ($(K)+(.2,-.7)$) -- ++(-1.5,0) -- ++(-.25,-.2);
      \draw[dashed,->] (-1.25,.35) -- (-1.25,.25) --(0,.25);
      \node at (-.5,.4) {\tiny$\some(a')$};
      \draw[->] (0,0) -- (0,-0.5) node[below]{$o$};
    \end{tikzpicture}
    \caption{
      How to construct an $M$ such that $F_M\prec F$ from an oracle machine $O$ that computes $F\colon {\subseteq\B}\to \B'$.
      Access to $O$ allows to conclude what values of $\varphi$ where accessed and even to express $M$ in G\"odels system-T.
      However, we consider $M$ to not provide this information but just as a function.
      This is emphasized by the single arrow from $\varphi$ to $M$.
      What values of $\varphi$ are accessed by $M$ is considered extra information and discussed in Section \ref{sec: continuous machines}.}
    \label{fig: FM}
  \end{subfigure}
  \hfill
  \begin{subfigure}[t]{.475\textwidth}
    \centering
    \begin{tikzpicture}
      \tikzstyle{every node}=[font=\footnotesize]  
      \node at (-1,3.75) {$\varphi$};
      \draw[->] (-1,3.5) -- (-1,3);
      \draw[dashed,->] (-1,3) -- (-1,2.5);

      \draw[rounded corners = 5pt] (-2.5,3) rectangle (1.5,0);
      \node at (.75,1.75) {$F$};
      
      \draw[rounded corners = 5pt] (-2,2.5) rectangle (0,1.5);
      \node at (-1,2) {$M$};
      \draw[->] (-.5,1.5) -- (-.5,1.25);
      \node at (0.05,1.14) {$o \stackrel{?}\neq\none$};
      \draw[->] (-1.5,1.25) -- node[below]{$(i,q')$} (-1.5,1.5);
     
      \draw[->] (-1.5,-0.5) node [below]{$q'$} -- (-1.5,0);
      \draw[dashed,->] (-1.5,0) -- (-1.5,0.9);
      \node at (-1.85,.25) {\tiny $i=0$};
      
      \draw[dashed,->] (-.225,.85) -- ++(.225,-.5) node[midway,right]{\footnotesize \cmark}
      -- (0,0);
      \node at (.75,.25) {\tiny$o = \some(a')$};
      \draw[dashed,->] (-.225,.85) -- node[midway,left]{\footnotesize \xmark} ++(0,-0.35);
      \draw[dashed,->] (-.25,.5) -- node[below]{\tiny increase $i$} (-1.5,.5);
      \draw[->] (0,0) -- (0,-0.5) node[below]{$a'$};
    \end{tikzpicture}
    \caption{
      Given a function $M$ a simple while loop recovers a choice function of $F_M$.
      The rounded corners of the boxes indicate that $M$ need not be computable.
      If $M$ is computable, then so is $F$ and both boxes can be depicted with pointy corners.
      While $M$ is a total function, the while loop may diverge whenever $\varphi$ is not from the domain of $F_M$ and thus the constructed $F$ can be properly partial.
    }\label{fig: MF}
  \end{subfigure}
  \caption{}
\end{figure}
Now that we discussed the tools we need for specification, the next step is to see how to produce computational objects that can fulfill these specifications.
In particular, we are interested in devising operators, that is, partial functions on Baire space or Baire-space-like spaces of functions.
We take the fuel-based approach for capturing divergence in type theories and adapt it to the operator and oracle machine setting of computable analysis.
As before, fix some countable sets $\Q$, $\A$, $\Q'$ and $\A'$ and abbreviate $\B := \A^\Q$ and $\B' := \A'^{\Q'}$.
To each function $M\colon \B \to \option(\A')^{\NN \times \Q'}$ assign a multifunction $F_M\colon \B \mto \B'$ by
\[ F_M(\varphi) := \{ \psi\in \B' \mid \forall q', \exists n, M(\varphi)(n,q') = \some (\psi(q'))\}. \]
This value-set can be empty or contain more than one element but for each $\varphi$ the set $F_M(\varphi)$ is a closed subset of $\B'$.
Looking back to the discussion in Section~\ref{sec: precompleteness}, up to multivaluedness being involved, a machine can be understood as replacing the representation on the target space by its precompletion.

Whenever an operator $F$ can be computed by an oracle machine, $M$ can be chosen to be the function that on inputs $\varphi$, $n$ and $q'$ runs the oracle machine for up to $n$ time steps on input $q'$ and oracle $\varphi$, in case of termination returns $\some a'$ where $a'$ is what the machine returned and otherwise returns $\none$ (see Fig. \ref{fig: FM}).
Then $F_M$ is single-valued and the corresponding partial function extends $F$ by the very definition of what it means for an oracle machine to compute $F$.
The values of $F$ can be recovered from those of $M$ by searching through increasing values of $n$, and for a general $M$ this gives a choice function of $F_M$ (Fig. \ref{fig: MF}).
If $M$ is obtained from an oracle machine and implemented in a reasonable way, this leads to a quadratic overhead in running times over direct execution which can be further reduced to a constant factor by considering only powers of two for $n$.
\Coq{}'s standard library proves a restricted choice principle called constructive epsilon that can be used to recover a choice function of $F_M$ as a dependently typed function when given an arbitrary function $M$.
Internally this leads to a linear search through the values.

Motivated by the oracle machine example, we call a function $M$ a \demph{machine} for $F$ if $F_M$ tightens $F$.
This analogy should be taken with a grain of salt and does not appropriately reflect the role played by the natural number input $n$.
We refer to $n$ as the \demph{effort parameter}, and while higher values do usually indicate a higher time consumption of the computation, it need not be directly related to the running time.
In particular, we refrain from interpreting the effort parameter as ordering a computation into a sequence of steps (as is the case for oracle machines) and instead embrace the view that it is a functional input.
Finally note that one may also consider the effort parameter as an index and a machine as a sequence of approximate realizers.

Figure \ref{fig: machine} illustrates the relation between a machine $M$ and the operator $F$ it implements. 
As an example, let us discuss the task of finding a multiplicative inverse in the rational representation in some detail.

\begin{figure}
  \centerline{
  \xymatrix@!0@C=2em@R=6em{
    f\colon \subseteq & \X \ar[rrrrrr] &&&&&& \Y \\
    F\colon & \B \ar@<-.5ex>[rrrrrr] \ar@<.5ex>[rrrrrr] \ar[u]^{\delta_\X} &&&&&& \B' \ar[u]_{\delta_\Y} \\
    M\colon & \B \ar[rrrrrr] \ar@{..>}[u]^{id} &&&\ar[u]_{F_{\mathunderscore}}&&& \option(\A')^{\NN \times \Q'} \ar@{..>}[u]_{{\delta_{\B'}}} 
  }
  }
  \caption{The multivalued function $F$ realizes the partial function $f$.
    If $M$ is a machine for $F$ then $F_M$ tightens $F$ and therefore also realizes $f$.
    The function $\delta_{\B'} \circ M$ chooses through $F_M$ where $\delta_{\B'}$ is the precomplete representation of Baire space described in Section~\ref{sec: precompleteness}.
    We use dotted arrows to emphasize that $F$ is multivalued while $M$ and $\delta_{\B'}$ are singlevalued.
    Replacing $\delta_{\B'}$ with a multi-representation recovers $F_M$ instead of a choice function. } 
    \label{fig: machine}
\end{figure}
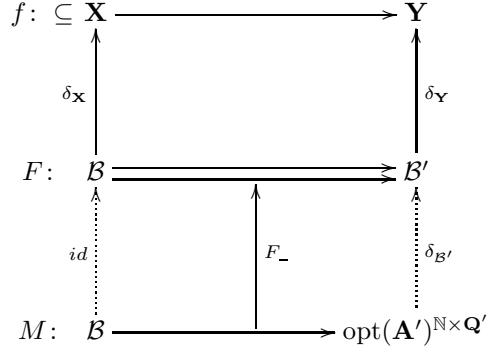
\subsection{Inversion in the rational representation}\label{sec: inversion}
Consider $x \mapsto \nicefrac 1x$ as a partial function on the represented space $\RR_\QQ$ from Example \ref{ex: rational reals}.
The function is partial as it is undefined in $0$.
We define a function $M\colon \B \to \option \QQ^{\NN \times \QQ}$ of which we claim that $F_M \colon \B \mto \B$ is a realizer of inversion:
\[ M(\varphi)(n, \varepsilon) :=
\begin{cases}
  \some\displaystyle\frac1{\varphi(\min\{\delta,\varepsilon\delta^2\}/2)} &\text{if } \delta := \abs{\varphi(2^{-n})} - 2^{-n} > 0\\
  \none & \text{otherwise.}

\end{cases} \]
Unfolding of definitions reveals that we have to prove that for all $x \neq 0$
\[ \delta_{\RR_\QQ}^{-1}(x) \subseteq \dom(F_M) \land \forall \varphi \in \delta_{\RR_\QQ}^{-1}(x)\colon F_M(\varphi) \subseteq \delta_{\RR_\QQ}^{-1}(\nicefrac 1x). \]
This should be understood as two statements:
Firstly that the domain of $F_M$ includes all names of real numbers from the domain of the inversion function, and secondly that it only returns correct values.

To prove the first of these statements, let $\varphi$ be a name of some $x\neq 0$.
It suffices to pick $n$ larger than $\log_2(|\nicefrac1x|)$ to avoid the second case and thus the domain of $F_M$ is big enough.
To check the second condition, i.e.\ that $F_M$ only returns correct values, let $\varphi$ be a name of $x$.
Here we use the property discussed below Example \ref{ex: booleans}, namely that the rational representation admits a separation into correct and incorrect answers.
Thus it is sufficient to check for each $\varepsilon>0$ that $M(\varphi)(n, \varepsilon) = \some r$ implies that $|r - \nicefrac 1x| \leq \varepsilon$.
If the assumption of this implication is true, then we have $\delta := |\varphi(2^{-n})| - 2^{-n} > 0$ and we know the value of $r$.
First note that $|x| \geq \delta$ as can be seen using the inverse triangle inequality and that $\delta$ is positive.
This, together with another application of the inverse triangle inequality, leads to
\[ |\varphi(\min\{\delta, \varepsilon \delta^2\}/2)| \geq \Big||x| - \frac{\min\{\delta, \varepsilon \delta^2\}}2\Big| \geq \frac \delta 2 \]
and allows us to conclude that
\begin{align*}
  \Big|\frac 1 {\varphi(\min\{\delta, \varepsilon \delta^2\}/2)} - \frac 1 x\Big| & = \frac{|\varphi(\min\{\delta, \varepsilon \delta ^2\}/2) - x|}{|\varphi(\min\{\delta, \varepsilon \delta ^2\}/2)x|} \\
  & \leq \frac{\min \{\delta, \varepsilon \delta^2\}}{\delta^2} \leq \varepsilon.
\end{align*}
As the left hand side is the value of $r$ this proves the correctness of return values.

The function $M$ is computable as all operation it uses on the rational numbers are computable.
Note that $F_M$ is properly multivalued.
In general, computability should imply continuity, but this does not make sense here as continuity only makes sense for single-valued functions.
This can be resolved by removing the multivaluedness of $F_M$ via picking its value on the smallest $n$ for which it returns something.
The function obtained in this way is a realizer since realizability is preserved under tightening.
As searching is a computable operation, this realizer is moreover computable, which is reflected in the fact that it can be defined as a dependently typed function in \Coq{} from the definition of $M$ as above.

\section{Machines as names of functions}\label{sec: continuous machines}

It is true that for every partial operator $F$ there exists some machine $M$ such that $F_M$ extends $F$.
This means that we can understand $M$ as a description of $F$ in a similar way to how representations work.
Nevertheless, as the candidates for question and answer sets are full function spaces and thus uncountable, this does not formally define a representation.
Access to $M$ alone is an inconvenient set of information in the sense that it is difficult to maintain through operations.
For instance, given machines for each of two operators $F$ and $G$, it can not be easily found for the operator $F\circ G$.
This is because $G(\varphi)$ and thus the input for $F$ can only be approximated from access to a machine for $G$.
Only when restricting to continuous operators, can one hope to succeed by extending some finite sub-function in an arbitrary way.
To guarantee that this does not interfere with the correctness of the return values, one needs explicit information about the continuity of $F$.
A set of such information that is often used in constructive analysis is a modulus function.

Fix sets $\Q$, $\A$, $\Q'$ and $\A'$, recall that $\Q^*$ denotes the finite lists of elements from $\Q$ and use the abbreviations $\B := \A^\Q$ and $\B':= \A'^{\Q'}$.
A function $\mu\colon {\subseteq \B} \to \Q^{*\Q'}$ is called a \demph{modulus} of an operator $F\colon {\subseteq \B} \to \B'$ if it is a Skolem function of the continuity statement from Section~\ref{sec: continuity} in the sense that for all $\varphi, \psi \in \dom(F)$ and $q' \in \Q'$
\begin{equation}\label{eq: continuity implication}
\varphi|_{\mu(\varphi)(q')} = \psi|_{\mu(\varphi)(q')} \implies F(\varphi)(q') = F(\psi)(q').
\end{equation}
In particular $\dom(F) \subseteq \dom(\mu)$ as otherwise the premise of the implication does not make sense.
A modulus $\mu$ may itself be considered an operator and the type of a modulus of $\mu$ coincides with the type of $\mu$ itself.
It thus makes sense to call a modulus \demph{self-modulating} if it is its own modulus.

\begin{proposition}\label{resu: self-modulus}
  Any continuous partial operator has a self-modulating modulus of continuity.
\end{proposition}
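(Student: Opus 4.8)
The plan is to construct a canonical \emph{minimal} modulus relative to a fixed enumeration of the question set and to show that minimality forces self-modulation. Fix a bijection of $\Q$ with $\NN$, giving an enumeration $q_0, q_1, \dots$ of $\Q$, and write $L_n := (q_0, \dots, q_{n-1}) \in \seq \Q$ for the initial segment of length $n$. For $\varphi \in \dom(F)$ and $q' \in \Q'$, say that a list $\str q \in \seq \Q$ \emph{secures} $q'$ at $\varphi$ if every $\psi \in \dom(F)$ with $\psi|_{\str q} = \varphi|_{\str q}$ satisfies $F(\psi)(q') = F(\varphi)(q')$; this is precisely the modulus condition \eqref{eq: continuity implication} for the single value $\str q$ at $(\varphi, q')$.

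First I would record two elementary facts. Securing is monotone: if $\str q$ secures $q'$ at $\varphi$ and $\str q$ is contained in a longer list $L$, then $L$ secures $q'$ at $\varphi$ as well, since agreement on $L$ implies agreement on $\str q$. By continuity of $F$ there is at least one finite securing list, and any such list is contained in some $L_n$; hence by monotonicity some initial segment $L_n$ secures $q'$ at $\varphi$. I can therefore define, without any appeal to choice beyond the least-number principle on $\NN$, the index $n(\varphi, q')$ to be the least such $n$ and set $\mu(\varphi)(q') := L_{n(\varphi, q')}$ with $\dom(\mu) := \dom(F)$. That $\mu$ is a modulus of $F$ is then immediate: $\mu(\varphi)(q') = L_{n(\varphi,q')}$ secures $q'$ at $\varphi$ by construction, which is exactly \eqref{eq: continuity implication}.

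The substance of the argument is verifying that $\mu$ is its own modulus. So I would suppose $\varphi, \psi \in \dom(F)$ agree on $\mu(\varphi)(q') = L_n$, where $n = n(\varphi, q')$, and show $n(\psi, q') = n$, whence $\mu(\psi)(q') = L_n = \mu(\varphi)(q')$. Since $L_n$ secures $q'$ at $\varphi$ and $\psi|_{L_n} = \varphi|_{L_n}$, the set of functions agreeing with $\psi$ on $L_n$ coincides with the set agreeing with $\varphi$ on $L_n$; consequently $F(\psi)(q') = F(\varphi)(q')$ and $L_n$ secures $q'$ at $\psi$ too, giving $n(\psi, q') \le n$. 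For the reverse inequality, the key observation is that for every $m \le n$ the segment $L_m$ is contained in $L_n$, so $\varphi$ and $\psi$ agree on $L_m$ and the two agreement classes again coincide; combined with $F(\psi)(q') = F(\varphi)(q')$ this makes ``$L_m$ secures $q'$ at $\psi$'' equivalent to ``$L_m$ secures $q'$ at $\varphi$''. Hence the least index of a securing initial segment is the same at $\psi$ and at $\varphi$, i.e.\ $n(\psi, q') = n$.

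I expect this last equivalence to be the only delicate point. An arbitrary modulus can fail to be self-modulating precisely because some shorter segment might secure a value at $\psi$ without securing it at $\varphi$; it is the combination of minimality and the fact that $\varphi$ and $\psi$ share the \emph{whole} segment $L_n$ that rules this out, upgrading the easy inequality $n(\psi,q') \le n$ to an equality. Everything else is bookkeeping over finite lists, and the construction is manifestly choice-free once continuity of $F$ is assumed, which should fit comfortably within the paper's classical-but-executable meta-theory.
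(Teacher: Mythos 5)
Your proof is correct and follows essentially the same route as the paper's: fix an enumeration of $\Q$ and take the minimal securing initial segment, whose minimality forces self-modulation. You additionally spell out the verification that the paper leaves as ``it can be checked'' (the two inequalities $n(\psi,q')\le n$ and $n(\psi,q')\ge n$), and your argument there is sound.
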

\begin{proof}
  Fix an enumeration of $\Q$.
  Let $\mu$ be the function that returns the minimal initial segment with respect to this enumeration such that the implication (\ref{eq: continuity implication}) is fulfilled.
  Since $F$ is continuous, $\mu$ is well defined.
  It is a modulus by definition and it can be checked that it is also self-modulating.
\end{proof}
The above proof is non-constructive: Checking whether the implication from equation \ref{eq: continuity implication} holds is not necessarily decidable and the argument that the modulus is well-defined thus relies on the law of excluded middle.
We suspect that the use of a classical background theory is indispensable as there are well known obstructions to constructively proving continuity statements about moduli \cite{MR966421, ESCARDO2016770}.
A direct argument that the idea behind the proof above is inherently non-constructive is that there exist computable operators whose minimal modulus is not computable.
When specialized to such an operator the proof asserts the existence of a non-computable function from computable input data and can thus not be made constructive without modifying its core idea of using the minimal modulus.

The proof that we stated for proposition \ref{resu: self-modulus} was chosen for being illustrative and short.
As stated it is a lot more non-constructive than necessary and in the formal development we proceeded differently in attempt to minimize the strength of the choice principles that get involved.
Note that from a set-theoretic point of view, the above does not require resorting to any kind of choice principle.
In set theory a function is a relation that uniquely specifies a return value for every input.
Choice principles typically select one of many possible return values.
In our case the uniqueness is guaranteed by the minimality condition and a selection is not necessary.
For understanding in how far the modified proof still improves over the one above in respect of the use of choice principles requires some background in type-theory.
For this reason we refrain from stating the details here and only briefly sketch the ideas.

From a type-theoretic point of view the proof of proposition \ref{resu: self-modulus} only ever talks about a specification of a unique functional modulus and not about the modulus function itself.
It specifies a property that the return-value of the modulus should have, and argues that such a return value always exists and is unique but does not say how to construct it.
In other words, the proof does not give a function definition of $\mu$ in the sense of type theory, as such a definition carries algorithmic information.
The process of filling in the missing piece and going from a relational specification to a function definition is referred to as ``functional relation reification'' in type theory.
Ideally, the procedure involves filling an abstract idea with concrete meaning.
As we already discussed, in the present case there is reason to doubt the feasibility of providing full algorithmic information.
It is possible to close the remaining gap axiomatically by use of a so-called ``no choice'' principle.
Of course, in doing so one should attempt to minimize the strength of the inefficient principle one uses.

So let us briefly discuss how inefficient such principles typically are.
A ``no choice'' principle states that for every specification that uniquely points to a return value there exists a function that fulfills this specification.
The strength of no choice principles can change considerably depending on the input type of the function to be constructed.
We will refer to the no choice principle for all functions with inputs of a certain type as the no choice principle \demph{over} that type and are mostly interested in those over the natural numbers and Baire-space.
While in constructive mathematics countable choice is often considered valid, in our classical setting already the countable version of a ``no choice'' principle is distinctively inefficient:
A typical application of this principle is to assert the existence of a non-computable function such as the characteristic function of the halting set as a subset of the natural numbers.
Uses of no choice principles over Baire space, by contrast, are typically needed when the function whose existence should be assured is not only incomputable but even discontinuous.
For instance consider the characteristic function of a one point subset of Baire-space, this function is discontinuous and its existence can be proven using the ``no choice'' principle over Baire-space.
The use of the ``no choice'' principle over Baire-space is particularly worth avoiding as it comes with a direct contradiction to other principles one may want to assume such as the uniform continuity principle.

Let us turn back to Proposition \ref{resu: self-modulus} and understand it to state that every continuous partial operator has a self-modulating modulus function in the type theoretical sense.
The proof does only talk about the relational specification, but the remaining gap can straight-forwardly be closed by use of the no choice principle over Baire space.
One may ask whether the full strength of this principle is necessary in this case, in particular as the constructed function is continuous.
Indeed, a choice principle over the natural numbers is sufficient for proving the proposition:
One may first use the countable choice principle to find a dense sequence $\varphi_n$ in the domain of the operator.
Then use it again to choose a minimal certificate for each combination of one of the countably many $q' \in \Q'$ and an element of this sequence.
From this data, the values of a self-modulating modulus can explicitly be constructed through an iteration:
Given some inputs $\varphi$ and $q'$, starting from the empty list iterate the process of first finding an element of the dense sequence that coincides with $\varphi$ on the given list and then updating the list to contain the minimal certificate associated with that element.
This iteration stabilizes detectably and the function that returns the list it stabilizes on can be proven a self-modulating modulus.
However, this does not produce the values of the minimal modulus but in general returns an over approximation.
Moreover, we used a proper choice principle and not only the ``no choice'' principle as the selection of a dense sequence in the domain involves selecting from a possibly uncountable number of candidate values for each natural number.

Let us close this discussion with two remarks.
The first is that the above argument still seems inherently inefficient.
This is because there exist algorithms operating on Baire-space whose natural domain of definition is non-empty but does not contain any computable elements \cite{kleene1953recursive}.
In such a case the sequence $\varphi_n$ is necessarily incomputable and thus it should not be expected to be constructible from the operator in general.
However, the typical examples of operators computed by such algorithms have total computable extensions (computed by a distinct algorithm) so that this argument cannot straight-forwardly be made rigorous.
At this point we should also make clear that the above informal description glossed over some of the details concerning partiality.
The second and final remark is that typically ``no choice'' principles in type theory are formulated such that they assert the propositional existence of the function.
The function whose existence is asserted can not directly be used definitionally at the type level.
This is important as it means that a machinery such as \Coq{}'s code extraction that removes non-computational parts of proofs will remove the inefficiencies and still return executable code.
Of course, the correctness of the executed code is now only provided under the condition that the assumptions were non-contradictory.
Direct executability via reduction in a type-theory based proof-assistant, on the other hand, is blocked by the use of such principles.
As we work over a classical background theory this is an expected outcome anyways.

\begin{definition}
Call a pair $(M, \mu)$ a \demph{continuous machine} if $M$ is of type  $\B \to \option \A'^{\NN \times \Q'}$ and $\mu$ is a self-modulating modulus of $M$.
Say that a continuous machine $(M, \mu)$ implements an operator $F \colon{\subseteq \B} \to \B'$ if $F_M$ tightens $F$.
\end{definition}

Let us emphasize that the function $\mu$ above is a modulus of continuity of the machine $M$ itself and not of a potential operator $F$ that it computes.
In particular, just like $M$ itself, the modulus $\mu$ is always a total function.
Proposition~\ref{resu: modulus} below implies that from $\mu$ one can obtain a modulus of continuity of any operator that is tightened by $F_M$.
However, it is not difficult to construct a discontinuous $M$ such that $F_M$ is a continuous partial function and thus a modulus of $F_M$ is not enough information to recover one of $M$.
Thus, a modulus of the computed operator should be considered to provide strictly less information than $\mu$.

Recall that earlier we started from a computable $F$ and discussed how to construct an appropriate $M$ from an oracle machine that computes $F$ (see Fig.~\ref{fig: FM}).
A computable, self-modulating modulus $\mu$ for $M$ can be readily read off the oracle machine by following the queries that the machine writes to its oracle tape.
The resulting pair $(M, \mu)$ is a continuous machine that implements $F$.
More generally, every continuous operator can be implemented by a continuous machine.

\begin{proposition}\label{resu: machines}
  If $F\colon {\subseteq \B} \to \B'$ is continuous then there exists some continuous machine that implements it.
\end{proposition}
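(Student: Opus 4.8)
The plan is to take a continuous operator $F$ together with a self-modulating modulus $\mu$, whose existence is guaranteed by Proposition \ref{resu: self-modulus}, and from these two pieces of data construct a machine $M$ and a self-modulating modulus for $M$ itself. The central idea is that $\mu$ tells us exactly which finite portion of the input $\varphi$ the value $F(\varphi)(q')$ depends on, so a machine can proceed by inspecting $\varphi$ on the list $\mu(\varphi)(q')$, and we should be able to certify that this inspection is complete once the effort parameter is large enough. Concretely, I would first fix an enumeration of $\Q$ so that the notion of "initial segment determined by effort $n$" makes sense, and then define $M(\varphi)(n,q')$ to return $\some(F(\varphi)(q'))$ when the queries recorded by $\mu(\varphi)(q')$ are all contained in the first $n$ elements of the enumeration, and $\none$ otherwise. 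Since $\mu(\varphi)(q')$ is a finite list, for every $\varphi \in \dom(F)$ and every $q'$ there is some $n$ large enough, so the forall-exists condition defining $F_M(\varphi)$ is met and $F_M$ is single-valued with value $F(\varphi)$. This establishes $F_M \prec F$.

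The key step is verifying that the constructed $M$ actually admits a self-modulating modulus, which is what upgrades the pair to a genuine continuous machine in the sense of the definition. Here I would exploit the self-modulating property of $\mu$ directly: a natural candidate modulus $\nu$ for $M$ on input $\varphi$ and question $(n,q')$ returns the initial segment of length $n$ of the enumeration together with $\mu(\varphi)(q')$ itself. The point is that the output $M(\varphi)(n,q')$ depends on $\varphi$ only through (i) the values of $\varphi$ on $\mu(\varphi)(q')$, which determine $F(\varphi)(q')$ by the modulus property \eqref{eq: continuity implication}, and (ii) the list $\mu(\varphi)(q')$ itself, which by self-modulation of $\mu$ depends on $\varphi$ only through its values on that same list. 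Because $\mu$ is its own modulus, any $\psi$ agreeing with $\varphi$ on $\mu(\varphi)(q')$ satisfies $\mu(\psi)(q') = \mu(\varphi)(q')$, so the two computations of $M$ agree. Checking that $\nu$ is in turn its own modulus reduces to the same self-consistency of $\mu$, so no genuinely new obstruction appears at the second level.

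The main obstacle I anticipate is the bookkeeping around partiality and the membership-test in the definition of $M$: deciding whether the finite list $\mu(\varphi)(q')$ is contained in the first $n$ elements of the enumeration must be phrased so that $M$ is a genuine total function on all of $\B$, including inputs outside $\dom(F)$, since $M$ is required to be total even though $F$ is only partial. Because $\mu$ is total by Proposition \ref{resu: self-modulus} and the containment test is decidable once the enumeration is fixed, this can be arranged, but the definition of $M(\varphi)(n,q')$ must not secretly invoke $F(\varphi)(q')$ on inputs where $F$ is undefined; I would handle this by only committing to the value $\some(F(\varphi)(q'))$ when the recorded queries fall within the allotted budget \emph{and} the resulting value is well-defined, defaulting to $\none$ otherwise, which is harmless since for $\varphi \notin \dom(F)$ we place no correctness demand on $F_M$. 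The remaining verifications — that $F_M$ tightens $F$ and that $\nu$ is self-modulating — are then routine consequences of the modulus property and its self-modulation.
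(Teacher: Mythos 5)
Your overall strategy --- feed a self-modulating modulus $\mu$ of $F$ obtained from Proposition \ref{resu: self-modulus} into a machine that returns the value of $F$ once $\mu(\cdot)(q')$ is covered by the first $n$ questions of a fixed enumeration --- is the same as the paper's, but your construction evaluates $F$ and $\mu$ at the actual input $\varphi$, and this is where it breaks. The machine $M$ must be a total function on all of $\B$, and your proposed fix for inputs outside $\dom(F)$ --- commit to $\some(F(\varphi)(q'))$ only when ``the resulting value is well-defined,'' defaulting to $\none$ otherwise --- makes the output of $M$ depend on whether $\varphi\in\dom(F)$. That set is in general only a $G_\delta$ and is not determined by any finite portion of $\varphi$, so the resulting $M$ is discontinuous: take $F$ to be the identity restricted to the sequences with infinitely many zeros, $\varphi$ the zero sequence and $\psi_k$ agreeing with $\varphi$ up to position $k$ and constant one afterwards; for $n$ large enough your $M(\varphi)(n,q')$ equals $\some(0)$ while $M(\psi_k)(n,q')=\none$ for every $k$, even though the $\psi_k$ agree with $\varphi$ on arbitrarily large finite sets. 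A discontinuous total $M$ admits no modulus at all, so the second half of your argument --- that your candidate $\nu$ is a self-modulating modulus of $M$ --- cannot go through. The default to $\none$ is indeed harmless for the tightening $F_M\prec F$, but it is fatal for the continuity of $M$, which is the other half of what ``continuous machine'' demands. (A secondary issue: Proposition \ref{resu: self-modulus} does not provide a total $\mu$; a modulus is only required to satisfy $\dom(F)\subseteq\dom(\mu)$, so even the containment test $\mu(\varphi)(q')\subseteq(q_1,\ldots,q_n)$ is not available for arbitrary $\varphi$.)

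The missing idea is to never evaluate $F$ or $\mu$ at $\varphi$ itself. The paper introduces a selection function $d$ that, given the finite restriction $\varphi|_{(q_1,\ldots,q_n)}$, returns some $\varphi'\in\dom(F)$ extending it if one exists and $\none$ otherwise, and sets $M(\varphi)(n,q'):=\some(F(\varphi')(q'))$ exactly when $d$ succeeds and $\mu(\varphi')(q')\subseteq(q_1,\ldots,q_n)$. Because $\mu$ is self-modulating, the returned value is independent of which representative $d$ picks and coincides with $F(\varphi)(q')$ whenever $\varphi\in\dom(F)$; and because the whole computation factors through the finite restriction $\varphi|_{(q_1,\ldots,q_n)}$, the constant function returning $(q_1,\ldots,q_n)$ is a modulus of $M$ and is trivially self-modulating, so your more elaborate candidate $\nu$ is not needed once the construction is set up this way.
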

\begin{proof}
  Let $d\colon \seq(\Q \times \A) \to \option(\B)$ be a function that, if the input list is the graph of a finite function $\phi$, returns some $\varphi \in \dom(F)$ such that $\phi = \varphi|_{\dom(\phi)}$ if such a $\varphi$ exists and otherwise returns $\none$.
  Let $\mu$ a self-modulating modulus of $F$ that exists by Proposition~\ref{resu: self-modulus} and $(q_n)_{n\in\NN}$ an enumeration of $\Q$.
  Let $M$ be given by
  \[ M(\varphi)(n, q') := 
  \begin{cases}
    \some (F(\varphi')(q'))  & \text{if } d(\varphi|_{(q_1, \ldots, q_n)}) = \some \varphi' \\
    & \text{and } \mu(\varphi')(q') \subseteq (q_1, \ldots, q_n) \\
    \none & \text{otherwise.}
  \end{cases}
  \]
  If $M$ returns something, the return value is correct because $\mu$ is self-modulating.
  On the other hand, whenever $\varphi\in\dom(F)$, there exists some $n$ such that $\mu(\varphi)(q') \subseteq (q_1, \ldots, q_n)$ and for this $n$ the machine reproduces the value of $F$.
  Clearly, the values of $M$ depend only on the values of $\varphi$ on $(q_1, \ldots, q_n)$, where $n$ is such that $\mu(\varphi')(q') \subseteq (q_1, \ldots, q_n)$.
  Just returning $(q_1, \ldots, q_n)$ is a modulus of $M$ that is independent of $\varphi$ and thus self-modulating.
\end{proof}
This proof uses classical reasoning and countable choice for the construction of the function $d$ and for picking an enumeration of $\Q$.
There exist algorithms whose natural domain of convergence is non-empty but does not contain any computable elements and it thus seems unlikely that an appropriate $d$ can be explicitly constructed even with access to an algorithm for computing $F$.
Just as before, the proof listed here is simpler than the formal version, where we adaptively increase the size of the initial segment depending on the values of the modulus.
This decreases the effort needed for a successful evaluation and while it introduces a dependency of the modulus on its functional input, it also forces its return values to eventually stabilize with increasing effort.
It even has a stronger property that, as introduced in the discussion of monotone machines below, it ``terminates with $M$''.

Just like it is possible to reconstruct the values of $F$ from $M$, a modulus for $F$ can be reconstructed using the additional information that a continuous machine implementing it provides.
We omit the somewhat straight-forward proof.
\begin{proposition}\label{resu: modulus}
  A machine that computes a modulus for $F\colon{\subseteq \B} \to \B$ can be obtained in a fully uniform way from a continuous machine that implements $F$.
  The construction can be done in such a way that it preserves being self-modulating.
\end{proposition}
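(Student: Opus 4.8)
The plan is to build the new machine by recycling the moduli that $\mu$ already supplies for $M$. First I would record the key consequence of $(M,\mu)$ implementing $F$: since $F_M$ tightens the single-valued $F$ and $\dom(F) \subseteq \dom(F_M)$, for every $\varphi \in \dom(F)$ the set $F_M(\varphi)$ is exactly the singleton $\{F(\varphi)\}$. Unwinding the definition of $F_M$, this forces $M$ to be \emph{consistent} on $\dom(F)$: whenever $M(\varphi)(n,q') = \some a$ for some effort $n$, then necessarily $a = F(\varphi)(q')$. Otherwise two distinct answers at the same $q'$ would let $F_M(\varphi)$ contain two functions differing at $q'$, contradicting single-valuedness. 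This consistency is precisely what lets a modulus of $M$ double as a modulus of $F$.

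Next I would define the candidate machine $N\colon \B \to \option(\Q^*)^{\NN \times \Q'}$ that emits the relevant piece of $\mu$ exactly when $M$ commits to an answer:
\[ N(\varphi)(n,q') := \begin{cases} \some(\mu(\varphi)(n,q')) & \text{if } M(\varphi)(n,q') \neq \none, \\ \none & \text{otherwise.} \end{cases} \]
Then I would check that every choice function of $F_N$ restricts to a modulus of $F$ on $\dom(F)$, equivalently that $F_N$ tightens the multifunction whose choices are exactly the moduli of $F$. For $\varphi \in \dom(F)$ and any $q'$, consistency yields an $n$ with $M(\varphi)(n,q') = \some(F(\varphi)(q'))$, so $N(\varphi)(n,q') = \some(\mu(\varphi)(n,q'))$ and $F_N(\varphi)$ is non-empty, giving $\dom(F) \subseteq \dom(F_N)$. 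For correctness, fix $\psi \in F_N(\varphi)$ and $q'$; then $\psi(q') = \mu(\varphi)(n,q')$ for some $n$ with $M(\varphi)(n,q') = \some a$ and $a = F(\varphi)(q')$. If $\psi' \in \dom(F)$ agrees with $\varphi$ on the list $\mu(\varphi)(n,q')$, the modulus property of $\mu$ gives $M(\psi')(n,q') = M(\varphi)(n,q') = \some a$, and consistency of $M$ at $\psi'$ gives $a = F(\psi')(q')$; hence $F(\varphi)(q') = F(\psi')(q')$, so $\psi$ is a modulus of $F$ at $\varphi$.

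Finally I would supply the modulus of $N$ and verify it is self-modulating, which is where the self-modulation of $\mu$ pays off. Because $N(\varphi)(n,q')$ is computed from $M(\varphi)(n,q')$ and from $\mu(\varphi)(n,q')$, and both depend on $\varphi$ only through $\varphi|_{\mu(\varphi)(n,q')}$ (the former since $\mu$ is a modulus of $M$, the latter since $\mu$ is self-modulating), the very same $\mu$ is again a modulus of $N$, and it is self-modulating by hypothesis. Hence $(N,\mu)$ is a continuous machine produced from $(M,\mu)$ by a single uniform formula, and self-modulation is preserved. The step I expect to be the main obstacle is the consistency observation of the first paragraph: it is the only place where the tightening hypothesis $F_M \prec F$ is genuinely used, and it must be handled with care precisely because $F_M$ is \emph{a priori} multivalued, so one has to rule out $M$ returning conflicting answers at the same $q'$ before the transport of moduli through agreement on $\mu(\varphi)(n,q')$ can be trusted.
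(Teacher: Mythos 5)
The paper omits its own proof of this proposition (``we omit the somewhat straight-forward proof''), so there is nothing to compare against line by line; judged on its own terms, your construction is the natural one and the bulk of it is correct. The consistency lemma is indeed the crux and your argument for it is sound: for $\varphi\in\dom(F)$ the tightening hypothesis forces $F_M(\varphi)$ to be the nonempty singleton $\{F(\varphi)\}$, and a stray answer $M(\varphi)(n,q')=\some a$ with $a\neq F(\varphi)(q')$ would place into $F_M(\varphi)$ the function agreeing with $F(\varphi)$ everywhere except at $q'$. From there the verification that $F_N$ tightens the modulus specification of $F$, and that $\mu$ itself modulates $N$ (using self-modulation of $\mu$ for the value and the modulus property for the case split), are both correct, so $(N,\mu)$ is again a continuous machine computing a modulus of $F$, fully uniformly.

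The one gap concerns the second sentence of the statement. On the reading that the modulus of $F$ obtained from the new machine should itself be a \emph{self-modulating} modulus of $F$ (paralleling the self-modulating $\mu$ one starts from), your $N$ does not obviously deliver this: a choice function of $F_N$ assigns to $(\varphi,q')$ the list $\mu(\varphi)(n,q')$ for \emph{some} effort $n$ at which $M$ answers, and two inputs $\varphi,\psi$ that agree on $\mu(\varphi)(n,q')$ may have their answers extracted at different least efforts, so the returned lists can differ even though each is individually a valid certificate. The repair mirrors the $\usefirst{M}$ construction appearing later in the paper: let $N$ return the accumulated list $\bigcup_i \mu(\varphi)(i,q')$ over all $i\le n$ with $M(\varphi)(j,q')=\none$ for every $j<i$. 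Agreement on this larger list pins down both the least terminating effort and all intermediate $\mu$-values, so the resulting (now single-valued) modulus of $F$ is self-modulating, and $\mu$-style arguments still give a self-modulating modulus for the modified $N$. If instead the second sentence is only meant to assert that the output machine again carries a self-modulating modulus, your observation that $\mu$ modulates $N$ already closes that case.
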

Here and in the following results by ``fully uniformly'' we mean that the transformation can be defined in a fragment of \Coq{}'s type theory small enough to not go beyond definability in system T when all the question and answer types are the natural numbers.
Adding a self-modulating modulus $\mu$ to a machine completes the set of information about $F$ in the sense that a continuous machine implementing a realizer of some function between represented spaces contains exactly the amount of information that one would expect to be specified about such a function in computable analysis.
To understand this in more detail let us first recall how computable analysis treats spaces of functions.

\subsection{Function spaces and continuous machines}\label{sec: function spaces}
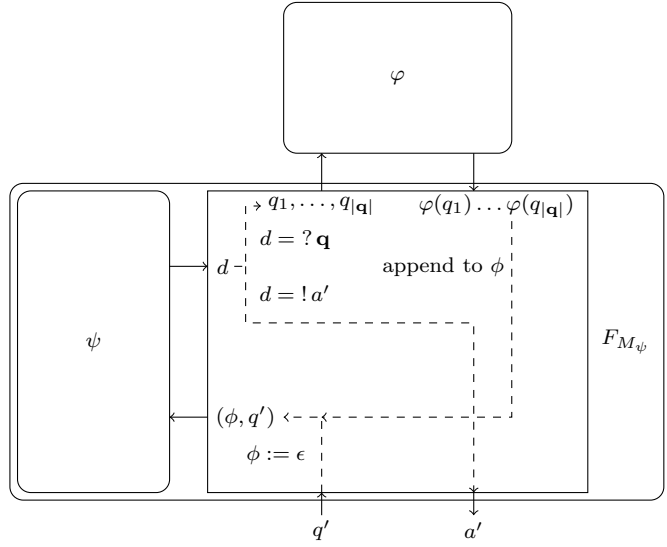
\begin{figure}
  \centering
  \begin{tikzpicture}
   \tikzstyle{every node}=[font=\footnotesize]  
   \draw (0,-1) rectangle (5,3);
   \draw[rounded corners = 5pt] (-2.6,-1.1) rectangle (6,3.1);
   \node at (5.5,1) {$F_{M_{\psi}}$};
   \draw[->,dashed] (1.5,-1) -- (1.5,0) -- (1,0);
   \node at (0.9,-.5) {$\phi := \epsilon$};

   \draw[->] (0,0) -- (-.5,0);
   \node at (.5,0) {$(\phi, q')$};
   
   \draw[rounded corners = 5pt] (-2.5,-1) rectangle (-.5,3);
   \node at (-1.5,1.0) {$\psi$};
   \draw[->] (-.5,2) -- (0,2);
   \node at (.2,2) {$d$};
   \draw[->,dashed] (.35,2) -- (.5,2)--(.5,2.8)--(0.7,2.8);
   \node at (1.15,2.35) {$d = \ask \str q$};
   \node at (1.15,1.65) {$d = \answer a'$};
   \draw[->,dashed] (.5,2) -- (.5,1.25) -- (3.5,1.25) -- (3.5,-1);
   
   \node at (1.5,2.8) {$q_1, \dots, q_{\size {\str q}}$};
   \draw[->] (1.5,3) -- (1.5,3.5);
   \draw[rounded corners = 5pt] (1,3.5) rectangle (4,5.5);
   \node at (2.5,4.5) {$\varphi$};
   \draw[->] (3.5,3.5) -- (3.5,3);
   \node at (3.8,2.8) {$\varphi(q_1)\ldots\varphi(q_{\size{\str{q}}})$};
   \draw[->,dashed] (4,2.6) -- (4,0) -- (1.5,0);
   \node at (3.1,2.0) {append to $\phi$};
   
   \node at (1.5,-1.5) {$q'$};
   \draw[->] (1.5,-1.3) -- (1.5,-1);
   \draw[->] (3.5,-1) -- (3.5,-1.3);
   \node at (3.5,-1.5) {$a'$};
  \end{tikzpicture}
  \caption{
    Some $\psi$ is a name of a function $f: \X \to \X'$ iff $F_{M_\psi}$ realizes $f$.
    The box with pointy corners represents a realistically implementable algorithm while $\varphi$ and $\psi$ may be computable or non-computable and are therefore depicted with rounded corners.
    Whenever $\psi$ is computable and has pointy corners, also $F_{M_\psi}$ will be computable and can be depicted with pointy corners.
  }\label{fig:U}
\end{figure}
In this part we make an additional assumption about the question type $\Q$, namely that it features decidable equality.
We do so to make it reasonable to encode finite functions as lists of input/output pairs i.e. over $\seq(\Q \times \A)$.
The decidable equality on $\Q$ is needed to make evaluation and checking for inclusion of a finite list in the domain of the finite function definable with respect to this encoding.
We hide this encoding and it only surfaces in our use of $|\phi|$ for the size of a finite function $\phi$, which we understand to denote the number of elements of the list and not the size of the domain of $\phi$ which may be smaller due to argument value pairs being repeated.
Also recall that in \Incone{} each question type of a represented space comes with a default question $q_d\in\Q$.
For convenience this section sometimes also assumes a default answer to be available.
The use of default answers can generally be avoided by first asking the default question and using the returned answer as default answer.
The assumption of decidability of equality on $\Q$, on the other hand, is not only for convenience, but essential for the argument.

Fix some represented spaces $\X$ and $\X'$ whose spaces of names are $\B = \A^\Q$ and $\B' = \A'^{\Q'}$ respectively.
In the following use $?$ as notation for the right inclusion into $\A' + \seq \Q$ and $!$ for the left inclusion, i.e.\ a question mark for a list of questions and an exclamation mark for an answer.
For a fixed function $\psi\colon \seq(\Q \times \A) \times \Q' \to \A' + \seq \Q$ and fixed $\varphi\in \B$ and $q' \in \Q'$ inductively define a sequence of finite functions $\phi_n \in \seq(\Q \times \A)$ by $\phi_0:= \epsilon$ and
\[ \phi_{n+1}:=
\begin{cases}
  \phi_n \cat {\varphi|_K} & \text{if } \psi(\phi_n,q') = {?K} \\
  \phi_n & \text{otherwise.}
\end{cases}\]
From this sequence define a machine $M_\psi$ as follows:
\[ M_\psi(\varphi)(n,q') :=
\begin{cases}
  \some a' & \text{if } \psi(\phi_n, q') = {!a'} \\
  \none & \text{otherwise.}
\end{cases}
\]
For illustration see Figure~\ref{fig:U}.

The function-space representation $\delta_{\X'^\X}$ assigns $\psi$ as name to a function $f\colon \X \to \X'$ if and only if $F_{M_\psi}$ realizes $f$.
This means that the space of names of functions is given by $\B_{\X'^\X}= (\A' + \seq(\Q))^{\seq (\Q \times \A) \times \Q'}$ and in particular the questions and answers of the function-space representation are countable and the standard question is the empty list paired with the standard question of $\X'$.
While our presentation is quite different, the central idea coincides with that behind Weihrauch's $\eta$ \cite{weihrauch_computable_2000}.
The function-space representation is precomplete but we do not go into detail about this here.
Straightforward computations show that $\mu_\psi(\varphi)(n, q') := \dom(\phi_n)$ is a self-modulating modulus of $M_\psi$ and thus $(M_\psi, \mu_\psi)$ is a continuous machine and that $F_{M_\psi}$ is single-valued and therefore also continuous by Proposition~\ref{resu: modulus}.
It is also true that every continuous function is given a name, thus the set underlying the represented space $\X^{\X'}$ are exactly the continuous functions.
Actually, the latter can be understood as a consequence of Proposition~\ref{resu: machines} together with the argument below.
Call $\psi$ an \demph{associate} of $F\colon\B \mto \B'$ if $F_{M_\psi}$ tightens $F$.
Then $\psi$ is a name of $f$ if and only if it is an associate of a realizer of $f$ and any associate of $F$ can be used to obtain a continuous machine that implements $F$.
To prove the converse of this statement it is sufficient to show that from a continuous machine $(M, \mu)$ one can obtain an associate $\psi_{M,\mu}$ of $F_M$.

To get an intuition for what an associate of $F_M$ should be doing, first consider the case where not only a continuous machine but an actual oracle machine is available.
The main obstacle in this case is that the associate is required to be a total function and divergences of the oracle machine need to be taken care of.
Define an associate $\psi$ of the operator computed by an oracle machine as follows:
given a finite function $\phi$ and some question $q'$ run the oracle machine for at most $\abs{\phi}$ steps while looking up the answers to the questions that the oracle machine asks in the finite function.
If the lookup fails for a question $q\in \Q$, then return $?(q)$.
If all lookups are successful and the machine terminates with return value $a'$ return $!a'$.
In case that $\abs{\phi}$ steps are exceeded without either happening, return $?(q_d)$ where $q_d$ is the default question of $\X$.
If the oracle machine comes to an end on input $q'$ and oracle $\varphi$, then this value is eventually reproduced by $M_\psi$ as $|\phi_n|$ grows with the effort $n$ until it hits the number of steps the oracle machine needs to conclude.

As the next step, let us discuss how to supplement full inspection capabilities into what $M$ does with access to a self-modulating modulus.
For illustration, we consider only the special case where $F$ is given as a total function together with a self-modulating modulus $\mu$.
That is, we drop the effort parameter and remark that this simplification is partially justified by the last paragraph, which argues that divergences can be taken care of.
Thus an associate of $F_M$ should, for fixed inputs $\varphi$, $q'$, attempt to get hold of $\mu(\varphi)(q')$, as this is the set of questions, the answers to which its final answer should depend on.
However, the associate only has access to a finite sub-function $\phi$ of $\varphi$.
Let us assume that a default answer $a_d\in\A$ is available and set
\[ \varphi_\phi(q) :=
\begin{cases}
  \phi(q) & \text{if } q \in \dom(\phi) \\
  a_d & \text{otherwise.}
\end{cases} \]
The associate may on input of $\phi$ and $q'$ use $\varphi_\phi$ as a replacement for $\varphi$.
However, $\varphi$ and $\varphi_\phi$ can only be expected to coincide on $\dom(\phi)$ and there does not seem to be a reason why $\mu(\varphi)(q')$ should have anything to do with $\mu(\varphi_\phi)(q')$.
This is where the property of being self-modulating comes in:
the values of the modulus coincide whenever $\mu(\varphi_\phi)(q') \subseteq \dom(\phi)$, and this is a condition that can be checked by the associate.
With this motivation, let the associate on input of $\phi$ and $q'$ check whether $\mu(\varphi_\phi)(q') \subseteq \dom(\phi)$ and if this test fails ask for the difference, i.e.\ return $?(\mu(\varphi_\phi)(q') \setminus \dom(\phi))$.
If the test is successful, then $\mu(\varphi_\phi)(q') =\mu(\varphi)(q')$ and the associate can safely return $!F(\varphi_\phi)(q')$ as $\mu$ is a modulus of $F$.

From the construction it should be clear that any value that is returned is correct.
However, as the modulus is evaluated on functional inputs that are different from the actual input in relevant places, an argument is needed to see that the iteration is finite.
Recall the sequence $\phi_n$ of finite functions that was used to define associateship and first argue that the sequence $\varphi_n := \varphi_{\phi_n}$ converges to some limit $\psi \in \B_\X$.
This is because for some fixed $q \in \Q$ either there exists some $n$ such that $q \in \dom(\phi_n)$, in which case $\varphi_k(q) = \varphi(q)$ for all $k$ bigger than $n$, or there does not exist such an $n$ and $\varphi_k(q) = a_d$ for all $k$.
From this it should be clear how the values of $\psi$ have to be picked.
As $\mu$ is self-modulating, it is continuous, and since all question and answer types are countable, it is also sequentially continuous.
Thus $\mu(\varphi_n)$ converges to $\mu(\psi)$ and thus there exists some $k$ such that $\mu(\varphi_m)(q') = \mu(\psi)(q')$ for any $m$ bigger or equal $k$.
In particular
\begin{align*}
  \mu(\varphi_{k+1})(q') & = \mu(\psi)(q') = \mu(\varphi_k)(q') \\
  & \subseteq \dom(\phi_k) \cat \mu(\varphi_k)(q')\setminus \dom(\phi_k) \\
  & = \dom(\phi_{k+1}),
\end{align*}
and $k+1$ is a sufficiently large effort to lead the evaluation of the associate to return a value.

\begin{theorem}\label{resu: main}
  There exists a fully uniform way to construct from a continuous machine $(M, \mu)$ and default elements $q_d \in\Q$ and $a_d\in \A$ an associate of $F_M$.
\end{theorem}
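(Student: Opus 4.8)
The plan is to realize the associate by merging the two constructions sketched above: the bounded-search idea that tames the divergences encoded by the effort parameter, and the modulus-inspection idea that exploits self-modulation to turn finite information about $\varphi$ into reliable evaluations of $M$. Throughout I write $\varphi_\phi$ for the total completion of a finite function $\phi$ by the default answer $a_d$, and I use that the test $\mu(\varphi_\phi)(m,q') \subseteq \dom(\phi)$ is decidable because $\Q$ has decidable equality.

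I would define $\psi_{M,\mu}(\phi,q')$ as follows. Search through the efforts $m = 0, 1, \dots, |\phi|$ for the least $m$ with $\mu(\varphi_\phi)(m,q') \subseteq \dom(\phi)$ and $M(\varphi_\phi)(m,q') = \some a'$; if such an $m$ is found, return ${!a'}$. If no such $m$ exists but some $m \le |\phi|$ has $\mu(\varphi_\phi)(m,q') \not\subseteq \dom(\phi)$, pick the least such $m$ and return ${?}\bigl(\mu(\varphi_\phi)(m,q') \setminus \dom(\phi)\bigr)$, asking for the values of $\varphi$ still missing from the relevant modulus. In the remaining case, where every modulus for $m \le |\phi|$ is already covered by $\phi$ but none yields an answer, return ${?}(q_d)$, which leaves the accessible information unchanged while forcing the iteration $\phi_n$ to grow and hence the search bound $|\phi_n|$ to increase. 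Since all ingredients are bounded searches and decidable inclusion tests, this definition stays inside the system-T fragment, giving the required full uniformity.

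Correctness of returned values is immediate: whenever $\psi_{M,\mu}$ answers ${!a'}$ at effort $m$, the test guarantees $\mu(\varphi_\phi)(m,q') \subseteq \dom(\phi)$, so $\varphi_\phi$ and the true input $\varphi$ agree on $\mu(\varphi_\phi)(m,q')$; as $\mu$ is a modulus of $M$ this forces $M(\varphi)(m,q') = M(\varphi_\phi)(m,q') = \some a'$, so $a'$ is a legitimate value of $F_M$ at $q'$. Consequently any $\chi \in F_{M_{\psi_{M,\mu}}}(\varphi)$ lies in $F_M(\varphi)$, which settles the value part of the tightening $F_{M_{\psi_{M,\mu}}} \prec F_M$.

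The main obstacle is the domain part, i.e. showing that for $\varphi \in \dom(F_M)$ and every $q'$ the iteration $\phi_0 \subseteq \phi_1 \subseteq \cdots$ eventually reaches an answer. I would fix $q'$, choose $N$ with $M(\varphi)(N,q') = \some a'$, and first observe that $\varphi_n := \varphi_{\phi_n}$ converges pointwise to some $\chi$: each question either enters some $\dom(\phi_n)$, after which $\varphi_n$ equals $\varphi$ there, or never does and stays at $a_d$. Since $\mu$ is self-modulating it is continuous, hence sequentially continuous as all types are countable, so for every $m \le N$ the finite list $\mu(\varphi_n)(m,q')$ stabilizes to $\mu(\chi)(m,q')$. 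There are only finitely many such $m$, each contributing a finite list, so after finitely many asking steps the growing $\dom(\phi_n)$ absorbs $\bigcup_{m \le N}\mu(\chi)(m,q')$; meanwhile $|\phi_n| \to \infty$ because every non-answering step appends at least one entry, so eventually $|\phi_n| \ge N$. At that stage $\mu(\varphi_n)(N,q') \subseteq \dom(\phi_n)$, whence $\varphi_n$ agrees with $\varphi$ on $\mu(\varphi_n)(N,q')$ and self-modulation yields $\mu(\varphi_n)(N,q') = \mu(\varphi)(N,q')$ together with $M(\varphi_n)(N,q') = M(\varphi)(N,q') = \some a'$; the least-$m$ search then returns an answer. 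The one point needing care is that the modulus sets are a moving target before stabilization, but monotonicity of $\dom(\phi_n)$ and finiteness of the stabilized data make the chase terminate, which is exactly the convergence argument foreshadowed in the discussion above.
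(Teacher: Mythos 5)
Your construction is essentially the paper's own: the same completion $\varphi_\phi$ by the default answer, the same three-case associate (answer, ask for the missing part of the modulus, fall back to $?(q_d)$) with the search bounded by $|\phi|$, and the same termination argument via pointwise convergence of $\varphi_{\phi_n}$ and sequential continuity of the self-modulating modulus. The only deviation is that your guards test inclusion and $\none$-ness at the single effort $m$ rather than at all $n\le m$ as in the paper's definition; this is a harmless variant, since value-correctness only needs the inclusion at the answering effort and your termination argument still goes through.
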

\begin{proof}
  The core ideas of the proof should be clear from the informal arguments above.
  The important points are that firstly the unbounded search for an effort big enough for $M$ can be moved to the search of a large enough effort in $M_\psi$, and secondly one can iterate the modulus to get hold of information about what values $M$ actually needs to know the values of $\varphi$ for.
  Combining these ideas, one can prove that the following associate is correct.
  Set $\psi_{M,\mu}(\phi, q') := !a'$ if there exists an $m \leq |\phi|$ such that $M(\varphi_\phi)(m, q') = \some b'$ and $\forall n \leq m \colon \mu(\varphi_\phi)(n, q') \subseteq \dom(\phi)$ and $a'$ is the $b'$ for the smallest such $m$.
  If there exists an $m \leq |\phi|$ such that $\mu(\varphi_\phi)(m,q') \not\subseteq \dom(\phi)$ and $\forall n \leq m\colon M(\varphi_\phi)(n,q') = \none$, then set $\psi_{M,\mu}(\phi, q') := 
?\mu(\varphi_\phi)(m', q')\setminus \dom(\phi)$ where $m'$ is the least such $m$.
  In any other case return $?(q_d)$.
  Figure \ref{fig:assoc} illustrates the main idea of the proof.
  We point the reader to the formal development for the details.
\end{proof}

The associate uses the extension $\varphi_\phi$ of a finite function $\phi$ and thus assumes availability of a default element $a_d\in \A$.
This assumption could be dropped if we use the first return value listed in $\phi$ instead of $a_d$  and return $?(q_d)$ if $\phi$ is empty.

\begin{figure}
  \centering
  \begin{tikzpicture}
    \draw[rounded corners = 5pt] (-3.5,-3) rectangle (6.2,4);
    \draw (8,-3) -- (6.75,-3) -- (6.75,4) -- (8,4);
    \node at (7.5,.5) {Fig. \ref{fig:U}};
    \node (S) at (6.2,-1.5) {};
    \draw[->] ($(S)+(0,4)$) -- ++(0.5,0);
    \draw[->] ($(S)+(0.55,0)$) -- ++(-0.55,0) node[left]{$({\color{magenta}{\phi}},{\color{blue}{q'}})$} ;

    \draw[->,color=magenta] ($(S)-(0.8,0.2)$) -- ++(0,-1.2) -- ++(-8.4,0) -- ++(0,6.1) node (P0){} -- ++(5.6,0) -- ++(0,-0.2);
    \draw[->,color=magenta] ($(P0)+(1.6,0)$) -- ++(0,-0.2);
    \draw[->,color=blue] ($(S)-(0.4,0.2)$) -- ++(0,-1) node (P1){} -- ++(-7.7,0) -- ++(0,3.2);
    \draw[->,color=blue] ($(P1)+(-3.7,0)$) -- ++(0,3.2);

    \begin{scope}[shift={(4,0.5)}]
      \node at (-1.5,2.4) {$\varphi_{\color{magenta}{\phi}}$};
      \draw[->] (-1.5,2.2) -- (-1.5,2);
      \draw[rounded corners = 5pt] (-2.5,2) rectangle (-.5,1);
      \node at (-1.5,1.5) {$\mu$};
      \draw[->] (-1,1) -- (-1,0.5) node[left, midway]{};
      \draw[->] (-2,0.5) node (Q1)[below]{\footnotesize ($i,{\color{blue}{q'}})$} -- (-2,1) ;
      \node (K) at (-0.3,0.3) {\footnotesize $\mathbf q \overset{?}{\subseteq} \dom {\color{magenta}{\phi}}$};
    \end{scope}

    \begin{scope}[shift={(0,0.5)}]
      \node at (-1.5,2.4) {$\varphi_{{\color{magenta}{\phi}}}$};
      \draw[->] (-1.5,2.2) -- (-1.5,2);
      \draw[rounded corners = 5pt] (-2.5,2) rectangle (-.5,1);
      \node at (-1.5,1.5) {$M$};
      \draw[->] (-1,1) -- (-1,0.5) node[left, midway]{};
      \draw[->] (-2,0.5) node (Q2)[below]{\footnotesize ($i,{\color{blue}{q'}})$} -- (-2,1) ;
      \node (A) at (-0.5,0.3) {\footnotesize $o \overset{?}{=} \none$};
    \end{scope}

    \draw[dashed, ->] ($(K)+(-0.1,0.1)$) -- ++(0.7,0.5) node[midway,above,sloped]{\footnotesize \xmark} node[right] (Kx) {};
    \draw[dashed, ->] ($(K)+(-0.25,-0.4)$) -- ++(0,-0.8) node[midway,left]{\footnotesize \cmark} node (Ko) {};

    \draw[dashed, ->] (Kx) -- ++(1.6,0) node[above left]{\footnotesize $\rightsquigarrow ? (\mathbf q \setminus \dom {\color{magenta}}{\phi})$} node (Ke) {};;

    \draw[dashed, ->] ($(A)+(-0.1,0.1)$) -- ++(0.7,0.5) node[midway,above,sloped]{\footnotesize \xmark} node[right] (Ax) {};
    \draw[dashed, ->] ($(A)+(-0.3,-0.3)$) -- ++(0,-0.8) node[midway,left]{\footnotesize \cmark} node [left] (Ao) {};
    
    \draw[dashed,->] (Ax) -- ++(0,2.1) -- ++(4,0) node[midway,above]{\footnotesize $o = \some(a')$} -- ++(1.8,0) node[midway,above,sloped]{\footnotesize $\rightsquigarrow !a'$} node (Ae) {};

    \node (I) at ($(A)+(-0.15,-1.4)$) {\footnotesize $i \overset{?}{<} \size {{\color{magenta}{\phi}}}$};
    \draw[dashed,->] ($(I)+(0,0.1)$) -- ++(0.7,0.5)node[midway,above,sloped]{\footnotesize \xmark} node[right] (Ix) {};
    \draw[dashed,->] ($(I)+(-0.1,-0.3)$) -- ++(0,-0.8) node[midway,left]{\footnotesize \cmark} node (Io) {};

    \draw[dashed,->] (Ix) -- ++(.8,0) -- ++(0,3.4) -- ++(3.1,0) -- ++(0,-.9) -- ++ (1.9,0) node [midway, above] {\footnotesize $\rightsquigarrow ?(q_d)$} node (Ie) {};

    \draw[dashed] (Ke) -- (Ae);
    \draw[dashed] (Ie) -- ++(0.2,0);
    
    \draw[dashed, ->] (Ko) -- ++(0,-2) -- ++(-5.6,0) -- ++(0,2.9) node (P1) [midway] {};

    \draw[dashed, ->] (Io) -- ++(0,-0.3) -- node[midway,above]{\footnotesize increase $i$} ++(2.55, 0) -- ++(0, 0.45);
 
    \draw[->,dashed] ($(S) + (-1.2,0)$) -- ++(-0.5,0) node (It)[left] {\footnotesize $i := 0$};
    \draw[->,dashed] (It) -- ++(-2.2,0) -- ++(0,2);
  \end{tikzpicture}
  \caption{
    Constructing an associate $\Psi_{M,\mu}$ from a continuous machine $(M, \mu)$.
    Here, $\varphi_\phi$ is the total function that extends $\phi$ with a default value.
    If $M$ and $\mu$ come with algorithms to compute them, we obtain an algorithm for $\psi_{M,\mu}$.
    In the picture this could be illustrated by making all rounded corners pointy.
 }
  \label{fig:assoc}
\end{figure}
\subsection{Continuous machines and monotonicity}

Continuous machines and associates theoretically contain the same information about a continuous operator.
However, in practice continuous machines are vastly superior to associates when the task is to directly implement an algorithm and formally prove it correct.
The skeptical reader may revisit the example of inversion on the rational reals from Section~\ref{sec: inversion}, supplement a self-modulating modulus and extract an associate.
Totality of the associate and encoding finite functions by lists introduces irrelevant default values making proofs of correctness tedious.
The translation from continuous machines takes part of the burden off the user.

As the concept of an associate is linked to partial combinatory algebras (compare for instance \cite{van2011partial}), many operations on continuous operators are in principle implementable for associates and thus also for continuous machines.
For implementation of operations on continuous operators, both working with associates and working with machines is unhandy but for somewhat different reasons.
While associates are difficult to construct, a continuous machine as input makes some important information not readily available.
One way of reflecting the difference in rigidity of the concepts is to translate back and forth between them.
While any continuous machine can be translated to an associate, the machines that are obtained from an associate have very special properties some of which can be maintained separately.

A property of continuous machines that can be propagated with relatively low effort and vastly simplifies implementation of operations such as the composition of operators is monotonicity in the following sense:
Call a machine $M$ \demph{monotone} if $M(\varphi)(n, q') = \some a'$ implies that for any $m \geq n$ it holds that $M(\varphi)(m, q') = \some a'$.
Call a continuous machine $(M, \mu)$ a \demph{monotone machine} if $M$ is monotone and $\mu$ \demph{terminates with} $M$ in the sense that once $M$ returns a value on some inputs, further increasing the effort on the same inputs does not lead $\mu$ to return bigger lists anymore.

For a monotone $M$, the corresponding $F_M$ is single-valued.
The machine we used to implement inversion in Section \ref{sec: inversion} is not monotone.
Any continuous machine constructed from an oracle machine as outlined in the introduction of this section and also those constructed from associates as outlined in the previous subsection are monotone.
Thus, if the equality on $\Q$ is decidable, translating from a continuous machine to an associate and back allows to force monotonicity.
A more direct method does not require any additional assumptions about question and answer sets.

\begin{proposition}
  From a continuous machine $(M, \mu)$ a monotone machine that implements a choice function of $F_M$ can be obtained.
  This construction can be done fully uniformly.
\end{proposition}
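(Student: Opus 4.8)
The plan is to resolve the multivaluedness of $F_M$ by always committing to the first answer $M$ produces, and to freeze the modulus at the same moment. First I would define the stabilised machine $M'$ by
\[
M'(\varphi)(n,q') :=
\begin{cases}
  \some a' & \text{if } m^* := \min\{m \le n \mid M(\varphi)(m,q') \neq \none\} \text{ exists}\\
           & \text{and } M(\varphi)(m^*,q') = \some a',\\
  \none    & \text{otherwise.}
\end{cases}
\]
Monotonicity is immediate: if $M'(\varphi)(n,q') = \some a'$ then the witnessing $m^*$ satisfies $m^* \le n$, and for every $m \ge n$ the least hit below $m$ is still $m^*$, so $M'(\varphi)(m,q') = \some a'$. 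By the remark above that $F_M$ is single-valued whenever $M$ is monotone, $F_{M'}$ is single-valued. For a fixed $q'$ it returns a value exactly when $M(\varphi)(\cdot,q')$ eventually leaves $\none$, and that value is $M(\varphi)(m^*,q')$ for the first hit $m^*$; hence $\dom(F_{M'}) = \dom(F_M)$, and for $\varphi \in \dom(F_M)$ the unique element of $F_{M'}(\varphi)$ lies in $F_M(\varphi)$ by the definition of $F_M$. Reading $F_{M'}$ as a partial function $f'$, this says precisely $f' \prec F_M$, so $f'$ is a choice function of $F_M$ and $(M',\mu')$ will implement it once the modulus below is in place.

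Next I would supply the modulus. Let $n_0(\varphi,q')$ be the first hit $m^*$ when it exists, put $k := \min\{n, n_0(\varphi,q')\}$ with the convention $k := n$ when no hit occurs up to $n$, and note that $k$ is determined by a bounded search over $m \le n$. Define $\mu'(\varphi)(n,q')$ to be the concatenation of the lists $\mu(\varphi)(0,q'), \ldots, \mu(\varphi)(k,q')$. Capping the index at $n_0$ is exactly what secures the \emph{terminates with} clause: once $M'$ has answered at effort $n \ge n_0$, the index $k$ remains equal to $n_0$, so $\mu'(\varphi)(n,q')$ no longer grows with $n$.

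It then remains to verify that $\mu'$ is a self-modulating modulus of $M'$. Suppose $\varphi$ and $\psi$ agree on $\mu'(\varphi)(n,q')$. Since this list contains each $\mu(\varphi)(m,q')$ for $m \le k$, and $\mu$ is both self-modulating and a modulus of $M$, agreement on $\mu(\varphi)(m,q')$ forces $\mu(\psi)(m,q') = \mu(\varphi)(m,q')$ and $M(\psi)(m,q') = M(\varphi)(m,q')$ for every $m \le k$. If $k = n < n_0$, all these $M$-values are $\none$, so $\psi$ also has no hit up to $n$; the capped ranges for $\varphi$ and $\psi$ coincide, and $M'(\psi)(n,q') = \none = M'(\varphi)(n,q')$. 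If $k = n_0 \le n$, the $M$-values agree up to and including the first hit, so $\psi$'s first hit is again exactly $n_0$ with the same value, whence by monotonicity $M'(\psi)(n,q') = M'(\varphi)(n,q')$ and the two capped concatenations coincide. In both cases $M'(\psi)(n,q') = M'(\varphi)(n,q')$ (so $\mu'$ is a modulus of $M'$) and $\mu'(\psi)(n,q') = \mu'(\varphi)(n,q')$ (so $\mu'$ is self-modulating). Every step is a bounded search or a list operation, so the construction is fully uniform.

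The main obstacle is the tension between the \emph{terminates with} clause, which demands that $\mu'$ stop growing once $M'$ answers, and the modulus and self-modulation clauses, which demand that $\mu'$ reveal enough of $\varphi$ to pin down both $M'(\varphi)(n,q')$ and $\mu'(\varphi)(n,q')$ themselves. The freezing index $k = \min\{n,n_0\}$ reconciles them: the finite information recorded in $\mu'(\varphi)(n,q')$ up to $n_0$ is exactly what determines the first hit of $M$, and self-modulation of $\mu$ guarantees that two inputs agreeing on this frozen information share the same first-hit effort $n_0$, so no further enlargement of the modulus is ever required.
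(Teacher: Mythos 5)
Your construction is exactly the paper's: the machine that returns the value at the first $m\le n$ where $M$ answers is the paper's $\usefirst M$, and your modulus (concatenating $\mu(\varphi)(0,q'),\ldots,\mu(\varphi)(k,q')$ with $k$ capped at the first hit) is precisely the paper's union $\bigcup_{\{i \mid i\le n \wedge \forall j<i\colon M(\varphi)(j,q')=\none\}}\mu(\varphi)(i,q')$. The only difference is that you carry out the verification that this modulus is self-modulating and terminates with $\usefirst M$, which the paper explicitly omits as straightforward; your case split on $k=n<n_0$ versus $k=n_0\le n$ is correct.
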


\begin{proof}
  Consider the monotone machine $\usefirst M$ (for ``use first'') defined as follows:
  on input of $\varphi$, $n$ and $q'$ search for the smallest $m \leq n$ such that a return-value is produced and return this value, if no such $m$ exists return $\none$.
  As $\usefirst M$ is monotone, $F_{\usefirst M}$ is a partial function and it respects the interpretation of $M$ in the sense that $F_{\usefirst M}$ is a choice function for the multivalued function $F_M$.

  A version $\usefirst \mu$ of the modulus such that $(\usefirst M, \usefirst \mu)$ is a monotone machine can be defined by
  \[ \usefirst \mu(\varphi)(n, q') := \bigcup_{\{i | i \leq n \land \forall j < i\colon M(\varphi)(j, q') = \none\}} \mu(\varphi)(i, q'). \]
  We omit the straight forward computation that this modulus is appropriate.
\end{proof}

The modulus takes a union over all previous values, which leads to an undesirable overestimation.
As a consequence, the modulus is monotone in the sense that the lists it returns grow with increasing effort and this property, while it can be a useful, is not required for the modulus of a monotone machine.
One may be tempted to modify the construction by omitting the values of the modulus where $M$ returns $\none$.
Unfortunately, the function obtained in this way is in general neither a modulus of $\usefirst M$ nor self-modulating.
\begin{example}[Modulus-failure]
  Set $\Q := \one =: \Q'$ and $\A := \BB =: \A'$ and consider the machine
  \[ M(\varphi)(n, \star) :=
  \begin{cases}
    \none & \text{if } n = 0 \text{ and } \varphi(\star) = \false \\
    \some \false & \text{if } n = 0 \\
    \some \true& \text{otherwise,} \end{cases} \]
  together with the self-modulating modulus
  \[ \mu(\varphi)(n, \star) :=
  \begin{cases}
    (\star) & \text{if } n = 0 \\
    \epsilon & \text{otherwise.}
  \end{cases} \]
  Then $\usefirst\mu(\varphi)(n,\star) = (\star)$ and if the union is replaced by only using the last element we obtain the function that for $n=0$ reproduces $\mu$ and for $n>0$ returns $\epsilon$ if $\varphi \equiv \false$ and $(\false, \true)$ otherwise and can be checked to neither modulate $\usefirst M$ nor itself.
\end{example}

\subsection{Composition of monotone machines}\label{sec: monotone machines}
Monotone machines are easier to operate on as it is not necessary to keep track of the exact value of an effort that leads to a return value but an upper bound is sufficient.
Operations on monotone machines can usually be written down in a very straightforward manner.
As an example for this, let us describe the composition of two monotone machines in some detail.
Let $(M, \mu)$ and $(M', \mu')$ be monotone machines such that $F_M \colon \A^\Q \mto \A'^{\Q'}$ and $F_{M'}\colon \A'^{\Q'} \mto \A''^{\Q''}$.
Define the \demph{monotone machine composition} as follows:
First fix some default element $a'_d \in \A'$ and for each function $\varphi\colon \Q \to \A$ define a sequence of functions $\varphi'_n\colon \Q' \to \A'$ by
\[ \varphi'_n(q') :=
\begin{cases}
  a' & \text{if } M(\varphi)(n, q') = \some a' \\
  a'_d &\text{otherwise.}
\end{cases} \]
Use $\dom_n$ as shorthand for the set of elements $q' \in \Q'$ such that there exists an $a'$ with $M(\varphi)(n, q') = \some a'$.
Note that whenever $\varphi \in \dom(F_M)$, then $\varphi_n'$ and $F_M(\varphi)$ coincide on $\dom_n$ by these definitions.
Set
\[ (M'\circ_{\mu'} M)(\varphi)(n, q'') :=
\begin{cases}
  M'(\varphi'_n)(n, q'') & \text{if }\mu'(\varphi'_n)(n, q'') \subseteq \dom_n \\
  \none & \text{otherwise.}
\end{cases}
\]
Here, we put the index $\mu'$ at the composition as the outcome may be different for different valid moduli $\mu'$ of $M'$.
Define the composition of the moduli by
\[ (\mu\circ_M \mu')(\varphi)(n, q'') := \bigcup_{q' \in \mu'(\varphi'_n)(n, q'')} \mu(\varphi)(n, q'). \]
Just like the composition of machines depends on $\mu'$, the composition of moduli depends on $M$ via the definition $\varphi'_n$.
The above correctly implements composition:
\begin{theorem}
  If $(M, \mu)$ and $(M', \mu')$ are monotone machines, then so is $(M' \circ_{\mu'} M, \mu \circ_M \mu')$.
  Furthermore $F_{M' \circ_{\mu'} M} = F_{M'} \circ F_{M}$.
\end{theorem}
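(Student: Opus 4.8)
The plan is to verify the two assertions in turn: first that $(M' \circ_{\mu'} M, \mu \circ_M \mu')$ satisfies every clause in the definition of a monotone machine, and then the semantic identity $F_{M' \circ_{\mu'} M} = F_{M'} \circ F_M$. Throughout I would rely on two elementary consequences of monotonicity of $M$, which I would record first: a value $M(\varphi)(n,q') = \some a'$ persists for all larger efforts, so $\dom_n \subseteq \dom_m$ for $n \leq m$, the functions $\varphi'_n$ and $\varphi'_m$ agree on $\dom_n$, and whenever $\varphi \in \dom(F_M)$ the value $\varphi'_n$ already agrees with $F_M(\varphi)$ on $\dom_n$. These facts let me treat $\varphi'_n$ on $\dom_n$ as a genuine (finite) fragment of $F_M(\varphi)$.

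For monotonicity of $M' \circ_{\mu'} M$, suppose $(M' \circ_{\mu'} M)(\varphi)(n,q'') = \some a''$, so that $\mu'(\varphi'_n)(n,q'') \subseteq \dom_n$ and $M'(\varphi'_n)(n,q'') = \some a''$, and fix $m \geq n$. Since $\varphi'_n$ and $\varphi'_m$ agree on $\dom_n$, they agree on $\mu'(\varphi'_n)(n,q'')$. Monotonicity of $M'$ gives $M'(\varphi'_n)(m,q'') = \some a''$, and because $\mu'$ terminates with $M'$ we get $\mu'(\varphi'_n)(m,q'') \subseteq \mu'(\varphi'_n)(n,q'') \subseteq \dom_n \subseteq \dom_m$. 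Applying the modulus property and self-modulation of $\mu'$ at effort $m$ then transfers both facts from $\varphi'_n$ to $\varphi'_m$, so the guard still holds and $M'(\varphi'_m)(m,q'') = \some a''$, i.e. the composed machine returns $\some a''$ at effort $m$.

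For the modulus clauses, suppose $\varphi$ and $\chi$ agree on $L := (\mu \circ_M \mu')(\varphi)(n,q'') = \bigcup_{q' \in \mu'(\varphi'_n)(n,q'')} \mu(\varphi)(n,q')$. For each index $q'$ we have $\mu(\varphi)(n,q') \subseteq L$, so $\varphi$ and $\chi$ agree there and, as $\mu$ is a modulus of $M$, $M(\varphi)(n,q') = M(\chi)(n,q')$; hence $\varphi'_n$ and $\chi'_n$ agree on $\mu'(\varphi'_n)(n,q'')$, and these coordinates lie in the respective $\dom_n$ for $\chi$ precisely when they do for $\varphi$. Self-modulation of $\mu'$ then gives $\mu'(\varphi'_n)(n,q'') = \mu'(\chi'_n)(n,q'')$ and equal guards, while the modulus property of $\mu'$ equates the returned values, proving $\mu \circ_M \mu'$ is a modulus; invoking self-modulation of $\mu$ on each summand gives $\mu(\varphi)(n,q') = \mu(\chi)(n,q')$ and hence equal unions, proving it self-modulating. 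The terminates-with clause follows from the corresponding clauses for $\mu$ and $\mu'$: once the composed machine has returned the guard held, so by the monotonicity analysis the outer index set no longer grows and, on that index set, each inner $\mu(\varphi)(\cdot,q')$ has already terminated.

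For the semantic identity I would prove two inclusions. For $\dom(F_{M'} \circ F_M) \subseteq \dom(F_{M' \circ_{\mu'} M})$ with matching values, given $\varphi \in \dom(F_M)$ with $\psi := F_M(\varphi) \in \dom(F_{M'})$ and a query $q''$, I would pick $n_0$ with $M'(\psi)(n_0,q'') = \some a''$, set $K := \mu'(\psi)(n_0,q'')$, and choose $n_1 \geq n_0$ large enough that $K \subseteq \dom_{n_1}$ (finitely many coordinates, each eventually produced); then $\varphi'_{n_1}$ agrees with $\psi$ on $K$, so self-modulation and the modulus property of $\mu'$, together with monotonicity of $M'$ and its terminates-with clause, reproduce both the guard and the value $\some a''$ at effort $n_1$. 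The reverse inclusion is where I expect the real work: from a returning run one reads, on the coordinates named by $\mu'(\varphi'_n)(n,q'') \subseteq \dom_n$, that $\varphi'_n$ coincides with $F_M(\varphi)$, so $M'$ is being evaluated on an honest fragment of $F_M(\varphi)$ and returns $F_{M'}(F_M(\varphi))(q'')$. The main obstacle is matching the \emph{domains} precisely — arguing from the guard $\subseteq \dom_n$ that the machine only ever commits to a value of $M'$ on coordinates $M$ has genuinely produced, so the recovered value is an application of $F_{M'}$ to $F_M(\varphi)$ rather than to a default-padded function. This partiality bookkeeping, reconciling the machine's notion of ``defined'' with that of the multifunction composition, is the delicate point, and it is here that I would lean on the details of the formal development.
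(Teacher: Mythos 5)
Your proof is correct and follows essentially the same route as the paper's: the same chain for monotonicity (agreement of $\varphi'_n$ and $\varphi'_m$ on $\dom_n$, then the modulus property, terminates-with, and self-modulation of $\mu'$ to transfer guard and value to effort $m$), and the same idea for the inclusion of $F_{M'}\circ F_M$ into $F_{M'\circ_{\mu'}M}$, where the paper invokes sequential continuity of $\mu'$ along $\varphi'_n \to F_M(\varphi)$ while you argue directly on the finite set $K$ --- a purely stylistic difference. You in fact cover slightly more than the paper, which explicitly omits the verification of the modulus clauses and, like you, proves only the ``more important half'' of the semantic identity, so your deferral of the reverse inclusion to the formal development matches the paper's own level of detail.
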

\begin{proof}  
  Let us first argue that the composition is monotone again.
  For this fix some inputs $\varphi$ and $q''$ and assume that $(M'\circ_{\mu'} M)(\varphi)(n,q'') = \some a''$.
  This can only be the case if $\mu'(\varphi'_n)(n,q'') \subseteq \dom_n$ and $M'(\varphi'_n)(n, q'') = \some a''$.
  To prove monotonicity we need to show that the same is true if $n$ is replaced by $n+1$.
  Since $M'$ is monotone it is sufficient to prove the list returned by the modulus to be included in $\dom_{n+1}$.
  Since $M$ is monotone, $\varphi'_n$ and $\varphi'_{n+1}$ coincide on $\dom_n$.
  As $\mu'$ is a modulus of $M'$, it holds that $M'(\varphi'_{n+1})(n,q'') = \some a''$.
  Since $\mu'$ terminates with $M'$, we get $\mu'(\varphi'_{n+1})(n+1, q'') = \mu'(\varphi'_{n+1})(n,q'')$.
  Finally, using that $\mu$ is self-modulating, we conclude
  \begin{align*}
    \mu'(\varphi'_{n+1})(n+1, q'') & = \mu'(\varphi'_{n+1})(n, q'') \\
    & = \mu'(\varphi'_n)(n, q'') \subseteq \dom_n \subseteq \dom_{n+1}.
  \end{align*}
  
  We omit the details of how to verify that the modulus is appropriate, and only outline how to prove the more important half of the equality, namely that the left-hand of the equation extends the right-hand side.
  For this assume that the right-hand side is defined in $\varphi$.
  This means that $\varphi \in \dom(F_{M})$ and $F_{M}(\varphi) \in \dom(F_{M'})$.
  Consider the sequence of functions $\varphi'_n$ as defined above and note that since $M$ is monotone and $\varphi\in \dom(F_{M})$, this sequence converges to $F_M(\varphi)$.
  Since $\mu'$ is self-modulating, it is in particular sequentially continuous and therefore the sequence $\mu'(\varphi'_n)$ converges to $\mu'(F_M(\varphi))$.
  This means that for any fixed $q''$ we can first pick $n$ big enough for $M'(F_M(\varphi))(n,q'')$ to take a value, then increase it further so that for all $k \geq n$ it holds that $\mu'(\varphi'_k)(n, q'') = \mu'(F_M(\varphi))(n, q'')$.
  As $\mu'$ terminates with $M'$, further increasing $n$ will no longer change the list it returns and we can use this to make sure that it is contained in $\dom_n$ as $\dom_n$ eventually contains every element of $\Q'$.
  As $q''$ was arbitrary, the left-hand side is defined and equal $F_{M'}(F_M(\varphi))$.
  
\end{proof}

Similar formulas can be found for other basic operations.
A composition of continuous machines can be obtained by first making these machines monotone and then composing them.
It is also possible compose continuous machines more directly but we failed to produce a simple description of this composition, the proofs of correctness are fairly involved and in experiments this composition did not perform well.
As a last remark, the formalization of the composition scheme, substitutes all uses of a default answer by default question.

\section{Conclusion}
Before we start a general discussion, let us comment on one very specific point.
While there are known constructions of self-modulating moduli from continuous moduli of continuity  \cite{fujiwara2019equivalence}, we are not aware of one that translates to our setting.
This is because all such constructions we are aware of directly work with the natural numbers and make use of their ordering.
We failed to recover such a result in our setting as our countability assumption seems too weak.
Being self-modulating turned out to be a very convenient property and most constructions of moduli that we came across result in self-modulating moduli independently of additional assumptions about the question and answer types.
Thus our decision to work with self-modulating moduli.

This paper is formulated from a point of view of computable analysis.
Computable analysis traditionally investigates known theorems from analysis and functional analysis concerning their computational content.
The mathematical background is developed over a classical meta-theory as are correctness proofs of algorithms.
An important part of computable analysis is that incomputable and discontinuous functions are not excluded and the classification of problems according to their degree of incomputability or discontinuity via Weihrauch reducibility is frequently studied \cite{brattka2017weihrauch}.
Our work and the \Incone{} library follow the traditions of computable analysis in the \Coq{} development and as mathematicians we found working over a strong meta-theory convenient.
A clear drawback is that providing computational content often means refining classical proofs and leads to some redundancy.
However, starting with a classical proof and effectivize step by step is often instructive.

Represented spaces relate to concepts popular in constructive analysis:
A representation defines a partial equivalence relation on its names by $\varphi \sim \psi \iff \delta(\varphi) = \delta(\psi)$.
Conversely, given a partial equivalence relation on Baire space one can consider the quotient space and consider the quotient mapping a representation.
Formulating everything using the equivalence relations, mentioning $X$ can be avoided completely.
This approach is for instance followed by developments like \corn{} \cite{cruz2004c}.
A function is called a morphism if it respects the equivalence relations and each such function induces a corresponding function on the equivalence classes that it realizes with respect to the quotient mapping as representation.
\Coq{} does not support quotient types and a direct description of the set of equivalence classes is additional information.
Definability of a function on abstract description need no longer correspond to computability.
Variations of this approach exist in \Coq{} and other proof assistants under the name ``refinements'' \cite{cohen2013refinements, lammich2013automatic}, but the objectives and with them which concepts are considered basic or useful differ significantly from our setting.

In our presentation we completely skipped the discussion of dialogue trees and jumped to associates directly.
In work about total functionals and mathematical work, dialogue trees play a central role.
Partial functions can be captured using a coinductive type of such trees.
We decided against this due to negative experiences with coinduction in \Coq{}, but we may try in the future.
It may also be worth looking into sequentiality concerns closer: While continuous machines characterize a sequential model of computation, they are seemingly non-sequential as the computations for different efforts may take distinct paths and need not be increasing in any way.
\bibliography{cc}{}
\end{document}